\newcommand{\real}{\mathbb{R}}
\newcommand{\complex}{\mathbb{C}}
\newtheorem{Thm}{Theorem}[section]
\newtheorem{Prop}[Thm]{Proposition}
\numberwithin{equation}{section}
\newcommand{\A}{\mathcal{A}}
\newcommand{\B}{\mathcal{B}}
\newcommand{\F}{\mathcal{F}}
\newcommand{\J}{\mathcal{J}}
\newcommand{\U}{\mathcal{U}}
\newcommand{\V}{\mathcal{V}}
\newcommand{\supp}{{\rm supp}\,}
\newcommand{\bd}{{\bf d}}            
\newcommand{\bdelta}{\bm{\delta}}    
\newcommand{\stM}{{\bm M}}           
\newcommand{\id}{\openone}
\begin{document}


\title{Quantization of the Maxwell field in curved spacetimes of arbitrary
dimension\\
(corrected version)}
\author{Michael J. Pfenning}
\email{Michael.Pfenning@usma.edu}
\affiliation{Department of Physics, United States Military Academy,%
 West Point, New York, 10996-1790, USA}

\date{July 31, 2013}

\begin{abstract}
We quantize the massless $p$-form field that obeys the generalized Maxwell
field equations in curved spacetimes of dimension $n\ge2$. We begin by
showing that the classical Cauchy problem of the generalized Maxwell
field is well posed and that the field possess the expected gauge
invariance.  Then the classical phase space is developed in terms of
gauge equivalent classes, first in terms of the Cauchy data and then
reformulated  in terms of Maxwell solutions.  The latter is employed
to quantize the field in the framework of Dimock.  Finally, the resulting
algebra of observables is shown to satisfy the wave equation with
the usual canonical commutation relations.
\end{abstract}

\pacs{04.62.+v, 03.50.-z, 03.70.+k, 11.15.Kc, 11.25.Hf, 14.80.-j}
\maketitle

\section{Introduction}

Tensor and/or spinor fields of assorted types in curved spacetime have been
studied by various authors, Buchdahl \cite{Buch58,Buch62,Buch82a,Buch82b,Buch84,%
Buch87}, Gibbons \cite{Gibb76}, and Higuchi \cite{Higu89} to name a few.
It is often found that the straightforward generalization of the flat
spacetime field equations to an arbitrary curved spacetime is wrought with
internal inconsistencies unless specific conditions are met.  For example, in
his earliest work Buchdahl \cite{Buch62} shows that ``the natural field
equations for particles of spin $\frac{3}{2}$ are consistent if and only if
the (pseudo)Riemann space(time) in which they are contemplated is an Einstein
space(time),'' that is one for which the Ricci tensor satisfies $R_{ab} =
\lambda g_{ab}$. Gibbons demonstrates that there is also a breakdown in the
Rarita-Schwinger formulation of spin $\frac{3}{2}$ in four-dimensional
curved spacetime. Meanwhile, Higuchi derives the constraint condition,
$
\nabla_c R_{ab} = \left[\frac{1}{18}(g_{cb}\partial_a + g_{ca}\partial_b)
+\frac{2}{9}g_{ab}\partial_c\right]R,
$
on the background metric for the generalization of the massive symmetric tensor
field equations on curved spacetime.  There are two cases where the above condition
is met: the metric is a solution to the vacuum Einstein equations or the Ricci
tensor $R_{ab}$ is covariantly constant.

In this paper, we study the $p$-form field theory (fully antisymmetric rank-$p$
tensors) in curved spacetimes of arbitrary dimension which obey the
generalization of the Maxwell equations. Unlike the examples above, the
$p$-form field equations generalize to any dimension without inconsistencies
or the need for constraints. Thus, $p$-form field theories seem to be a
natural model for the study of quantum field theories in higher dimension
curved spacetime.  For example, the minimally coupled scalar field ($p=0$
in any dimension) and the electromagnetic field ($p=1$ in four dimensions)
are two examples of this self-consistent $p$-form theory.  Better still,
the quantization of the classical theory turns out to be identical for all $p$-%
form fields, independent of the rank of form.

This manuscript proceeds in the following order. In Section~%
\ref{sec:Gen_Maxwell_field} we discuss the classical generalized Maxwell
field. This begins with a review of electrodynamics in four dimensions
and then proceeds to the generalization of the Maxwell equations into
arbitrary dimension. It is at this point that we convert from the
traditional notation used in relativistic physics to that of exterior
differential calculus thus giving us $p$-form fields.  We then discuss
fundamental solutions to the resulting wave equation and the initial
value problem for the classical field. We shall see that there exists
a gauge freedom in the field which complicates the uniqueness of solutions
for a given set of initial data.  Thus, the Cauchy problem is only well
posed if we work with gauge equivalent classes.

In Section~\ref{sec:Quantization} we quantize the $p$-form field.  From
the classical field theory, we obtain a symplectic phase space consisting
of a real vector space and a symplectic form.  This phase space is
quantized by promoting functions in the phase space to operators acting
on a Hilbert space while simultaneously requiring the commutator of
such operators to be $-i$ times their classical Poisson bracket.  In
this way, the algebra of observables for the quantized field on a manifold
is obtained. Finally, there will be discussion and conclusions.


Throughout this paper we will use units where $\hbar = c =1$. The notation
$C_0^\infty(\real^n)$ denotes the space of smooth, compactly supported%
~\footnote{The {\em support} of a function is the closure of the set of
points on which it is nonzero.}, complex-valued functions on $\real^n$.
We take $\stM$ to be a smooth $n$-dimensional manifold (without boundary)
which is connected, orientable, Hausdorff, paracompact, and equipped with a
smooth metric of index $s$~\footnote{The index is the number of spacelike
(i.e., negative norm-squared) basis vectors in any $g$-orthonormal frame.}.
We denote the space of smooth, complex-valued $p$-forms on $\stM$ by
$\Omega^p(\stM)$; the subspace of compactly supported $p$-forms will be
written $\Omega_0^p(\stM)$. Each $p$-form may be regarded as a fully
antisymmetric covariant $p$-tensor field. Our conventions for forms are
consistent with that of Abraham, Marsden and  Ratiu (AMR)~\cite{AMR88}
and are also summarized in our earlier paper \cite{Pfen03} to which we
refer the reader. Therefore, we only introduce the remaining notational
necessities here to make this paper sufficiently self contained.

The exterior product between forms will be denoted by $\wedge$, the
exterior derivative on forms will be denoted by $\bd$, the Hodge
$*$--operator by $*$ and the co-derivative by $\bdelta$.  The Laplace-%
Beltrami operator, simply called the Laplacian on a Riemannian manifold,
is defined $\Box =-\left(\bdelta\bd + \bd\bdelta\right)$. All these are
consistent with the previous paper.

The only difference between the preceding paper and present one is the
definition for the symmetric pairing $\langle\cdot,\cdot\rangle$ of
$p$-forms under integration;
\begin{equation}
\langle \U ,\V \rangle_\stM \equiv \int_\stM \U \wedge *\V
\end{equation}
for any $\U,\V\in\Omega^p(\stM)$ for which the integral exists. Also,
for smooth $\U\in\Omega^{p-1}(\stM)$ and $\V\in\Omega^p(\stM)$ we have
$\bd(\U\wedge*\V) = \bd\U\wedge* \V -\U\wedge*\bdelta\V$, therefore by
Stokes' theorem
\begin{equation}
\langle \bd\U,\V\rangle_\stM = \langle \U,\bdelta\V\rangle_\stM
\label{eq:d_delta_duality}
\end{equation}
whenever the supports of the forms have compact intersection.  In this
sense the operators $\bd$ and $\bdelta$ are dual.


\section{Classical analysis of the generalized Maxwell field}
\label{sec:Gen_Maxwell_field}

\subsection{Classical electrodynamics in four dimensions}
In Minkowski spacetime it is most common to study the electromagnetic field in
the abstract index notation \cite{Wald_GR} where $F_{ab}$ is the covariant
field--strength tensor and the Maxwell equations are
\begin{equation}
\partial_{[a}F_{bc]} = 0\qquad\mbox{and}\qquad
\partial^a F_{ab} = - 4\pi j_b.
\label{eq:flat_space_maxwell}
\end{equation}
Here $j_b$ is the current density, $\partial_a$ is the partial derivative,
lowering or raising of indices is done with respect to the metric $\eta_{\mu\nu}
=\mbox{diag}(1,-1,-1,-1)$ and its inverse respectively, and $[\ ]$ in the
homogeneous Maxwell equation is shorthand for the antisymmetric permutation
over the indices.

It is known that the generalization of the Maxwell equations to curved
four--dimensional spacetimes is internally consistent. This is accomplished
by the minimal substitution rule; replace the partial derivatives with
$\nabla_a$, the unique covariant derivative operator associated with the
spacetime metric such that $\nabla_a g_{bc} = 0$. The Maxwell equations then
become
\begin{equation}
\nabla_{[a}F_{bc]} = 0\qquad\mbox{and}\qquad
\nabla^a F_{ab} = - 4\pi j_b.
\label{eq:conventional_maxwell}
\end{equation}
It is also common to introduce the (co)vector potential $A_a$ related
(at least locally) to the field strength tensor by $F_{ab} = \nabla_{[a} A_{b]}$.
Recast in $A_a$, the homogeneous equation is trivially satisfied, as a result
of the first Bianchi identity, and the inhomogeneous equation becomes
\begin{equation}
\nabla^{a}\nabla_{a} A_b -\nabla_{b}\nabla^{a} A_a - {R^{a}}_b A_a = - 4\pi j_b.
\end{equation}
The Maxwell equations, in both flat and curved spacetime have a gauge
freedom in that many different forms of $A_a$ give rise to the same
$F_{ab}$.  This comes from the freedom to add to $A_a$ the gradient
of any scalar function $\Lambda$. Because the covariant derivatives,
like partial derivatives, commute when acting on scalars the addition
of the gradient term has no effect on the final outcome of the resulting
field strength.  Therefore one can choose to work in a particular gauge,
say the Lorenz gauge where $\nabla^{a} A_a = 0$. Then we have considerable
simplification to the globally hyperbolic equation
\begin{equation}
\nabla^{a}\nabla_{a} A_b - {R^{a}}_b A_a = - 4\pi j_b.
\end{equation}
The benefit of doing so is that existence and uniqueness of solutions to
globally hyperbolic equations has been well studied and with a little
work we can ``extract'' solutions to our original equation.  This will be
covered in more detail below.

\subsection{Generalized Maxwell Field}

The electromagnetic field is a specific example in four dimensions
of a much broader theory of fully antisymmetric tensor fields in curved
spacetimes of arbitrary dimension.  It is mathematically natural to handle
such tensors in the language of exterior differential calculus, there
being a number of benefits to doing so: (a) All equations are coordinate
chart independent, thus we obtain global results directly. (b)  Even if a
coordinate chart is specified, the differential forms are still independent
of the choice of connection, thus substantially simplifying coordinate based
calculations.  (c) The generalized Maxwell equations do index bookkeeping
in a `natural' (albeit hidden) way for forms of different rank or in spacetimes
of varying dimension. It is precisely this mechanism by which the generalized
Maxwell equations avoid all of the consistency problems discussed above for
other types of fields. No subsidiary conditions are needed on the spacetime,
and the spacetime itself need not satisfy the Einstein equation. We will
elaborate more on this point later.

At first glance one could consider the fundamental object of study be the
$(p+1)$-form field strength $\F$, which can be thought of as a fully-%
antisymmetric rank-$(0,p+1)$ tensor. Then, the {\it generalized Maxwell
equations} are
\begin{equation}
\bd\F = 0 \qquad\mbox{ and }\qquad -\bdelta\F = \J,
\label{eq:generalized_maxwell}
\end{equation}
where $\J$ is the $p$-form current density.  Electromagnetism happens
to be the case where $\F$ is a two-form in a four-dimensional spacetime
and given a local coordinate chart, the above equations reduce to the
conventional Maxwell equations~\ref{eq:conventional_maxwell}.

However, we have chosen not to take $\F$ to be our fundamental object
for three reasons: (a) The action in terms of $\F$, as we will see below,
still involves $\A$ in the interaction term with $\J$, thus
$\A$ has to be defined from the relation $\F = \bd \A$. However, only
on spacetimes that have trivial $(p+1)$-th cohomology group, that is
$H^{p+1}(\stM) = \{ [0] \}$, can $\F$ be formulated globally in terms
of a $p$-form potential $\A\in\Omega^{p}(\stM)$ such that $\F=\bd\A$
is true everywhere. There is a topological restriction to doing this if
the cohomology group is nontrivial.  Thus starting with $\A$ as the
fundamental object avoids cohomological problems early on.
(b) Furthermore, even if we were study the free field, dropping all
the terms involving $\J$, it is unclear how to then go from the action
in terms of $\F$ to field equations without the introduction of $\A$ again.
(c) We are specifically working with the massless field here, but from
our previous experience with the Proca field we find that $\A$ is the
fundamental object of study.  Also, when $\A\in\Omega^0(\stM)$, the
action and field equation are that of the minimally-coupled massless
scalar field in curved spacetime.  So we have strong reason to treat
$\A$ as the fundamental object here which is derivable directly from
the action~(\ref{eq:the_action}) below.

This last point is rather remarkable.  The minimally coupled scalar
field and the electromagnetic field are but two examples of a general
$p$-form field theory in curved spacetime.  In some of our previous
work we had indications of this property.  Solutions of the massless
scalar field theory in four dimensions could be used to construct
the gauge photon polarizations of the (co)vector potential and that
information could be used to help generate one of the two physically
allowed polarization states of the (co)vector potential \cite{Pfen03}.
The remaining physical state comes from Gram-Schmidt orthogonalization.
It had also been noted that the quantum inequality for the electromagnetic
field was exactly twice that of the minimally-coupled scalar field.

\subsection{The generalized Maxwell field equations and fundamental solutions}
\label{sec:classical_field_eqs}

Let $\stM$ be a globally hyperbolic spacetime, that is, a manifold of
$\dim(\stM)=n$ with Lorentzian metric of signature $s=n-1$, i.e. the metric
is of the form $(+, -, -, \dots)$. On this spacetime we take our fundamental
object to be the field $\A\in\Omega^p(\stM)$ with $0\le p < n$.  The classical
action is given by
\begin{equation}
\mathcal{S}=(-1)^{p+1} \left[-\frac{1}{2}\langle \bd\A , \bd\A\rangle_\stM -
\langle\A,\J\rangle_\stM \right]+ S(\J),
\label{eq:the_action}
\end{equation}
were $S(\J)$ is the remainder of the action for the current density
$\J\in\Omega^p(\stM)$.  The only criteria that we ask of $\J$ is that
it be co-closed, i.e. $\bdelta\J = 0$ so as to preserve charge/current
conservation. Variation with respect to the field yields the
generalized Maxwell equation,
\begin{equation}
-\bdelta\bd \A = \J.
\label{eq:generalized_potential_eq}
\end{equation}

The field strength is then calculated from $\A$ by $\F =\bd\A$. It is easily
seen from this definition of the field strength that there is a {\it gauge
freedom} in that to any solution $\A$ of Eq.~(\ref{eq:generalized_potential_eq})
we may add $\bd\Lambda$  where $\Lambda\in\Omega^{p-1}(\stM)$.  While this
changes the value of the gauge field $\A$ at every point, it leaves the field
strength $\F$ unchanged.  We will denote any two solutions $\A$ and $\A'$
to be gauge equivalent by $\A \sim \A'$ if they differ by the exterior
derivative of a $(p-1)$-form. Thus when discussing the potential,
particularly for the quantum problem, we will often work in gauge equivalent
classes denoted by $[ \A ]=\A + \bd\Omega^{p-1}(\stM)$.

In practice one often chooses not to solve the above equation directly
but instead work with the constrained Klein-Gordon system,
\begin{equation}
\Box\A = \J \qquad\mbox{ with }\qquad \bdelta\A=0,
\label{eq:_Constr. KG}
\end{equation}
as any solution that satisfies~(\ref{eq:_Constr. KG}) is also a solution
to~(\ref{eq:generalized_potential_eq}).
In a given coordinate chart, this constrained Klein-Gordon system can be
written in component form as \cite{Lich61}
\begin{equation}
\nabla^\beta \nabla_\beta \A_{\alpha_1\dots\alpha_{p}} -
\sum_{j=1}^{p} {R_{\alpha_j}}^\beta \A_{\alpha_1\dots\beta\dots\alpha_{p}} +
\sum_{ j,\,k=1,\,j\neq k}^{p} {R_{\alpha_j}}^\beta
{ _{\alpha_k}}^{\gamma} \A_{\alpha_1\dots\beta\dots\gamma\dots\alpha_{p}}
= \J_{\alpha_1\dots\alpha_{p}}
\label{eq:KGincomponents}
\end{equation}
and
\begin{equation}
\nabla^\beta \A_{\beta\alpha_2\dots\alpha_{p}}=0.
\end{equation}
Here $R_{\alpha\beta}$ and
$R_{\alpha\beta\gamma\delta}$ are the Ricci and Riemann tensors, respectively.
Also, the notation is such that the index  $\beta$ in the first summation
occupies the $j^{\rm th}$ place in the tensor component of  $\A$ while in
the double summation the indices $\beta$ and $\gamma$ occupy the $j^{\rm th}$
and $k^{\rm th}$ spots in the tensor components. The Riemann and Ricci terms
are not unexpected; in differential geometry they are a result of the
Weitzenb\"{o}ck identity.

The advantage of using the constrained Klein-Gordon system of equations is
that $\Box$ is a normally hyperbolic operator \cite{BGP07} with principal
part $g^{\mu\nu}\partial_\mu\partial_\nu$.  Thus, there exists a unique
advanced (-) and retarded (+)
Green's operator denoted by  $E^{\pm}:\Omega^{p}_0(\stM)\rightarrow
\Omega^{p}(\stM)$, (see Corollary 3.4.3 of B\"{a}r, Ginoux and Pf\"{a}ffle
\cite{BGP07}, or Choquet-Bruhat \cite{Choq67} and Proposition 3.3 of
Sahlmann and Verch \cite{Sah&Ver}) which have the properties
\begin{equation}
\Box E^\pm = E^\pm \Box = \id,
\end{equation}
and for all $f\in\Omega^{p}_0(\stM)$, the $\supp(E^\pm f)\subset J^\pm(\supp(f))$.
Furthermore, the map defined by the Green's operators are sequentially continuous.
All of the differential operations on forms commute with the Green's operator;

\begin{Prop}\label{prop_commutators}
Let $f\in\Omega^p_0(\stM)$ be a test function, then the following operations
involving $E^\pm$ commute: (a) $\bd E^\pm f = E^\pm \bd f$, and
(b) $\bdelta E^\pm f = E^\pm \bdelta f$.
\end{Prop}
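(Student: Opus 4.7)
The plan is to reduce each identity to the uniqueness of solutions of $\Box u=0$ with past-compact or future-compact support. The first step is a purely algebraic observation: since $\bd^2=0$ and $\bdelta^2=0$, the definition $\Box=-(\bdelta\bd+\bd\bdelta)$ gives
\[
\bd\Box = -\bd\bdelta\bd = \Box\bd,\qquad \bdelta\Box = -\bdelta\bd\bdelta = \Box\bdelta,
\]
as identities on forms of any rank. Thus $\bd$ and $\bdelta$ both commute with the wave operator.

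The second step converts each claim into the assertion that a certain smooth form lies in the kernel of $\Box$ with controlled support. For part (a), set
\[
u^\pm := \bd E^\pm f - E^\pm \bd f \in \Omega^{p+1}(\stM),
\]
where on the right $E^\pm$ is the Green's operator for $\Box$ acting on $(p+1)$-forms, applied to $\bd f\in\Omega^{p+1}_0(\stM)$ (legitimate because $\Box$ is normally hyperbolic on each form bundle separately, so the cited results yield a Green's operator for every rank). Using $\Box E^\pm = E^\pm \Box = \id$ together with Step~1,
\[
\Box u^\pm = \bd(\Box E^\pm f) - \Box E^\pm(\bd f) = \bd f - \bd f = 0.
\]
Because $\bd$ is a local operator, $\supp(\bd E^\pm f)\subseteq\supp(E^\pm f)\subseteq J^\pm(\supp f)$, and trivially $\supp(E^\pm \bd f)\subseteq J^\pm(\supp\bd f)\subseteq J^\pm(\supp f)$. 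Hence $u^+$ has past-compact support and $u^-$ has future-compact support, both contained in the causal region emanating from the compact set $\supp f$.

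The third step invokes the standard uniqueness result for normally hyperbolic equations on a globally hyperbolic spacetime (the same fact used in \cite{BGP07} to characterize $E^\pm$): the only smooth solution of $\Box u=0$ with past-compact (resp.\ future-compact) support is $u\equiv 0$. Applied to $u^\pm$ this gives (a). Part (b) follows by exactly the same argument with $\bd$ replaced by $\bdelta$ and $(p+1)$ replaced by $(p-1)$ in the codomain. The only place that requires care is this final uniqueness step, since it must be applied to sections of a form bundle rather than to scalar fields; but normal hyperbolicity of $\Box$ on every rank makes the standard proof go through verbatim, so once that observation is in hand the argument is essentially formal.
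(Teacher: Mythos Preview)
Your proof is correct and is essentially the same argument as the paper's: both reduce to the commutation of $\bd$ and $\bdelta$ with $\Box$ together with the uniqueness of solutions to $\Box u = g$ having support in $J^\pm(\supp g)$. The only difference is packaging---you subtract the two sides and show the difference has vanishing $\Box$ and past/future-compact support, whereas the paper shows both sides solve the same inhomogeneous equation with the same support constraint; your version is slightly more explicit about why the support condition needed for uniqueness is actually met.
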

\begin{proof}
(a)  Let $f\in\Omega_0^p(\stM)$.  We know that $\A^\pm = E^\pm f \in \Omega^p(\stM)$
is the unique solution to $\Box \A^\pm = f$.  Likewise we have $\A'^\pm =  E^\pm
\bd f \in \Omega^{p+1}(\stM)$ is the unique solution to $\Box \A'^\pm = \bd f $.
Consider
\begin{equation}
\Box\bd\A^\pm = \bd\Box\A^\pm = \bd f = \Box\A'^\pm.
\end{equation}
Since $\A'^\pm$ is unique we deduce $\bd\A^\pm =\A'^\pm$, therefore
$\bd E^\pm f = E^\pm \bd f$.

(b)  Let $f$ and $\A^\pm$ be as defined above.  Set $\A'^\pm = E^\pm\bdelta f\in
\Omega^{p-1}(\stM)$, which is the unique solution to $\Box \A'^\pm = \bdelta f $.
Then consider
\begin{equation}
\bdelta\Box\A^\pm = \Box\bdelta\A^\pm = \bdelta f = \Box\A'^\pm.
\end{equation}
Since $\A'^\pm$ is unique we deduce $\bdelta\A^\pm =\A'^\pm$, therefore
$\bdelta E^\pm f = E^\pm \bdelta f$.
\end{proof}

We also need the advanced minus retarded propagator $E\equiv E^- - E^+$.  Since
it is a linear combination of the advanced and retarded propagators, it has all
of the same commutation properties above and gives the unique solutions to the
homogeneous (source free) Klein-Gordon equation. We are now ready to show that
the existence of a fundamental solution for the Klein-Gordon equation also gives
us a solution to the generalized Maxwell equations.

\subsection{Cauchy Problem for $\Box\A=0$}

The the initial value problem for a gauge field in four dimensional
spacetimes has been treated in the initial sections of Dimock and
for the Proca field by Furlani.  We follow closely the notation and
structure of both these papers in this section as we generalize their
results to all $p$-form fields with $p<n$ in globally hyperbolic spacetimes
of arbitrary dimension.

In order to relate initial data to solutions of the wave equation  we
first need to discuss Green's theorem for forms.
Let $\stM$ be a globally hyperbolic spacetime, and $\mathcal{O}\subset
\stM$ be an open region in the spacetime with boundary $\partial\mathcal{O}$.
Define the natural inclusion $i:\partial\mathcal{O}\rightarrow \mathcal{O}$
and $i^*$ the pullback.  On $\stM$ let $\A \in \Omega^p(\stM)$ and $\B \in
\Omega^p_0(\stM)$, then
by Stokes theorem we have
\begin{equation}
\int_\mathcal{O} \left(\A\wedge*\Box\B - \B\wedge*\Box\A\right) =
\int_{\partial\mathcal{O}} i^*\left(\A\wedge*\bd\B + \bdelta\A
\wedge*\B\right) - \int_{\partial\mathcal{O}} i^*\left(\B\wedge*\bd\A
+\bdelta\B\wedge*\A\right),
\label{eq:Greens}
\end{equation}
which is called Green's identity for $\Box$ (Sect.~7.5 of AMR \cite{AMR88}).
The integrals are all well defined because $\B$ has compact support which
does not expand under any of the derivative operations.

Next, let $\Sigma\subset\stM$ be a Cauchy surface in the spacetime and
define $\Sigma^\pm\equiv J^\pm(\Sigma)\backslash \Sigma$.  If we use $\mathcal{O}=
\Sigma^\pm$ and $\partial\mathcal{O}=\Sigma$ in Green's identity we
have
\begin{equation}
\int_{\Sigma^\pm} \left(\A\wedge*\Box\B - \B\wedge*\Box\A\right)
= \mp\left[\int_{\Sigma} i^*\left(\A\wedge*\bd\B + \bdelta\A
\wedge*\B\right) - \int_{\Sigma} i^*\left(\B\wedge*\bd\A
+\bdelta\B\wedge*\A\right)\right],
\end{equation}
where the sign difference on the right hand side comes about because
of the opposite orientation of the unit normal to the Cauchy surface.
For smooth maps, the pullback is natural with respect to both the wedge
product and the exterior derivative, thus we may distribute it across the terms above.
We define the following operations which act on $p$-forms:
\begin{eqnarray}
\rho_{(0)} &=& i^*\\
\rho_{(\bd)} &=& (-1)^{p(n-p-1)+(n-1)}*\, i^* *\bd = \rho_{(n)}\bd\\
\rho_{(\bdelta)} &=& i^*\bdelta = \rho_{(0)}\bdelta\\
\rho_{(n)} &=& (-1)^{(n-p)(p-1)+(n-1)}*\, i^* * .
\end{eqnarray}
The first operation is the pullback of the form onto the Cauchy Surface,
the second is the forward normal derivative, the third is the pullback
of the divergence and the last is the forward normal \cite{Dimock,Furl99}.
Note, all operations to the right of $i^*$ act on forms which are defined
on the whole manifold $\stM$.  However, operations to the left of $i^*$
are defined with respect to the Cauchy surface $\Sigma$, thus in the
forward normal derivative and forward normal, the first Hodge star in
each expression is with respect to the induced metric on the Cauchy
surface.  With these operations, the above equation reduces to
\begin{equation}
\int_{\Sigma^\pm} \left(\A\wedge*\Box\B - \B\wedge*\Box\A\right)
= \mp\left(\langle\rho_{(0)}\A,\rho_{(\bd)}\B\rangle_\Sigma
+\langle\rho_{(\bdelta)}\A,\rho_{(n)}\B\rangle_\Sigma
-\langle\rho_{(0)}\B,\rho_{(\bd)}\A\rangle_\Sigma
-\langle\rho_{(\bdelta)}\B,\rho_{(n)}\A\rangle_\Sigma\right).
\label{eq:applied_Greens}
\end{equation}

We begin the discussion of the fundamental solutions to the Klein-Gordon equation.
First we look at the mapping of $\Box$ solutions to initial data.

\begin{Prop}\label{Prop:A_with_f}
Let $\A\in\Omega^p(\stM)$ be a smooth solution of $\Box\A=\J$ with Cauchy
data
\begin{eqnarray}
A_{(0)}&\equiv& \rho_{(0)}\A \in\Omega^p(\Sigma)\nonumber\\
A_{(\bd)}&\equiv& \rho_{(\bd)}\A \in\Omega^p(\Sigma)\nonumber\\
A_{(\bdelta)}&\equiv& \rho_{(\bdelta)}\A \in\Omega^{p-1}(\Sigma)\nonumber\\
A_{(n)}&\equiv& \rho_{(n)}\A \in\Omega^{p-1}(\Sigma).\nonumber
\end{eqnarray}
Then, for any compactly supported test function $f\in\Omega^p_0(\stM)$
we have
\begin{equation}
\int_\stM \A\wedge*f =
\langle\J , E^- f\rangle_{\Sigma^+} +\langle\J , E^+ f\rangle_{\Sigma^-}
-\langle A_{(0)},\rho_{(\bd)} E f\rangle_\Sigma
-\langle A_{(\bdelta)},\rho_{(n)} E f\rangle_\Sigma
+\langle A_{(\bd)},\rho_{(0)} E f\rangle_\Sigma
+\langle A_{(n)},\rho_{(\bdelta)} E f\rangle_\Sigma.
\label{eq:INVprop1}
\end{equation}
\end{Prop}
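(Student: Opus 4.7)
The plan is to split the integration over $\stM$ along the Cauchy surface $\Sigma$, apply Green's identity~(\ref{eq:applied_Greens}) on each half, and substitute the appropriate Green's-function image of $f$ in place of the generic compactly supported form $\B$, so that the source terms $\langle\J,E^\pm f\rangle_{\Sigma^\pm}$ emerge from the bulk contributions and the causal propagator $E$ assembles naturally from the boundary contributions.

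Concretely, I would begin by writing
\begin{equation*}
\int_\stM \A\wedge*f \;=\; \int_{\Sigma^+}\A\wedge*f \;+\; \int_{\Sigma^-}\A\wedge*f,
\end{equation*}
since $\Sigma$ has measure zero. On each half I would replace $f$ by $\Box(E^\mp f)$, selecting on each side the propagator whose image has compact intersection with that half: global hyperbolicity of $\stM$ guarantees that $J^\mp(\supp f)\cap J^\pm(\Sigma)$ is compact whenever $\supp f$ is compact, so the chosen $\B=E^\mp f$ is effectively compactly supported on its region of integration and the hypothesis of~(\ref{eq:applied_Greens}) is met. The Proposition~\ref{prop_commutators} type commutation of $\bd,\bdelta$ with $E^\pm$ is not needed here, but the sequential continuity of $E^\pm$ ensures all traces $\rho_{(\cdot)}E^\pm f$ are well defined on $\Sigma$.

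Applying~(\ref{eq:applied_Greens}) on each half, the bulk piece $(E^\mp f)\wedge*\Box\A=(E^\mp f)\wedge*\J$ yields precisely the pairings $\langle\J,E^\mp f\rangle_{\Sigma^\pm}$ appearing in~(\ref{eq:INVprop1}); the remaining surface pieces on $\Sigma$ produce four $\rho$-type pairings between $\A$ and $E^\mp f$. Because the two halves contribute with opposite signs (the $\mp$ of~(\ref{eq:applied_Greens})) and use opposite labels $E^\mp$, on summing the two half-space equations the $E^\pm$ dependence in each of the four surface pairings collapses, via the definition $E\equiv E^--E^+$, into a single $Ef$. Re-identifying the traces through $A_{(0)}=\rho_{(0)}\A$, $A_{(\bd)}=\rho_{(\bd)}\A$, $A_{(\bdelta)}=\rho_{(\bdelta)}\A$, $A_{(n)}=\rho_{(n)}\A$ and using the symmetry of the pairing $\langle\,\cdot\,,\,\cdot\,\rangle_\Sigma$ then produces the right-hand side of~(\ref{eq:INVprop1}).

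The main obstacle I anticipate is the sign bookkeeping: four independent boundary pairings, opposite orientation signs on the two sides of $\Sigma$, and opposite assignments of $E^+$ versus $E^-$ must all align so that the differences collapse to $+E$ in exactly the pattern shown in~(\ref{eq:INVprop1}), including the relative minus sign between the pairs $(A_{(0)},A_{(\bdelta)})$ and $(A_{(\bd)},A_{(n)})$. The only genuinely analytic input beyond this is the global-hyperbolicity fact that $J^+(K_1)\cap J^-(K_2)$ is compact for compact $K_1,K_2$, which is what legitimises treating the restrictions of $E^\pm f$ as compactly supported test forms when invoking Green's identity.
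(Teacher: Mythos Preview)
Your overall plan---split along $\Sigma$, apply Green's identity~(\ref{eq:applied_Greens}) on each half with $\B$ a Green's-function image of $f$, then add---is exactly the paper's. The gap is in your propagator assignment. The paper sets $\B=E^{\pm}f$ on $\Sigma^{\pm}$, not $E^{\mp}f$. With that choice the two bulk source terms are $\langle\J,E^{+}f\rangle_{\Sigma^{+}}$ and $\langle\J,E^{-}f\rangle_{\Sigma^{-}}$, and the boundary contributions combine as $B(\A,E^{-}f)-B(\A,E^{+}f)=B(\A,Ef)$, reproducing~(\ref{eq:INVprop1}) with the correct signs. With your assignment $\B=E^{\mp}f$ on $\Sigma^{\pm}$, the source terms come out as $\langle\J,E^{-}f\rangle_{\Sigma^{+}}+\langle\J,E^{+}f\rangle_{\Sigma^{-}}$---which are \emph{not} the pairings in~(\ref{eq:INVprop1}), contrary to what you write---and the surface terms combine as $B(\A,E^{+}f)-B(\A,E^{-}f)=-B(\A,Ef)$, flipping the sign of all four boundary pairings. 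So the very ``sign bookkeeping'' you flagged as the obstacle is where your argument goes wrong.

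Your compactness justification is in fact correct for the assignment $E^{\mp}$ on $\Sigma^{\pm}$ (since $J^{\mp}(\supp f)\cap J^{\pm}(\Sigma)$ is compact), and that derivation produces a valid identity---just not the one stated. To obtain~(\ref{eq:INVprop1}) as written you must follow the paper and take $\B=E^{\pm}f$ on $\Sigma^{\pm}$; the paper's justification that the integrals are well defined rests on $E^{\pm}f$ having compact support on Cauchy surfaces, rather than on your causal-diamond argument.
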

\begin{proof}
This is a generalization of the proof for Theorem 5 of Furlani \cite{Furl99}.
Note, Furlani uses a different sign convention for the definition of the propagator $E$.
Using Eq.~\ref{eq:applied_Greens}, set $\A$ equal to the smooth solution of
$\Box\A = \J$ and $\B =  E^\mp f$ for $\Sigma^\pm$ respectively, then
\begin{eqnarray}
\int_{\Sigma^\pm} \left(\A\wedge*\Box E^\mp f - E^\mp f \wedge*\Box\A\right)
&=& \mp\left(\langle\rho_{(0)}\A,\rho_{(\bd)}E^\mp f\rangle_\Sigma
+\langle\rho_{(\bdelta)}\A,\rho_{(n)}E^\mp f\rangle_\Sigma\right.\nonumber\\
&&\left.-\langle\rho_{(0)}E^\mp f,\rho_{(\bd)}\A\rangle_\Sigma
-\langle\rho_{(\bdelta)}E^\mp f,\rho_{(n)}\A\rangle_\Sigma\right).
\end{eqnarray}
Substituting the Cauchy data, noting that $\A$ is a smooth
solution to the Klein-Gordon equation with source $\J$ and
using $\Box E^\pm = \id$, we can simplify the above to
\begin{equation}
\int_{\Sigma^\pm} \A\wedge* f = \langle\J , E^\mp f \rangle_{\Sigma^\pm}
\mp\left(\langle A_{(0)},\rho_{(\bd)}E^\mp f\rangle_\Sigma
+\langle A_{(\bdelta)},\rho_{(n)}E^\mp f\rangle_\Sigma
-\langle \rho_{(0)}E^\mp f, A_{(\bd)}\rangle_\Sigma
-\langle\rho_{(\bdelta)}E^\mp f, A_{(n)}\rangle_\Sigma\right)
\end{equation}
When the two above equations are added for $\Sigma^\pm$, the
result is Eq.~\ref{eq:INVprop1}.  Furthermore, the integrals in this
expression are well defined because $f$ has compact support, thus
$E^\pm f$ and $E f$ have compact support on all other Cauchy surfaces.
\end{proof}

We now address the issues of existence and uniqueness for homogenous solutions
to the Klein-Gordon equation.
\begin{Prop}\label{prop_uniqueness}
{\rm (Uniqueness of homogeneous $\Box$ solutions)} If $\A$ is a smooth solution to
$\Box\A=0$ with Cauchy data $A_{(0)}=0$, $A_{(\bdelta)} = 0$, $A_{(\bd)}=0$,
and $A_{(n)} = 0$ then $\A=0$.
\end{Prop}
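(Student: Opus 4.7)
The plan is to combine the integral representation of Proposition~\ref{Prop:A_with_f} with the non-degeneracy of the symmetric pairing $\langle\cdot,\cdot\rangle_\stM$. Since $\A$ is assumed to solve the homogeneous equation with vanishing Cauchy data, every term on the right-hand side of (\ref{eq:INVprop1}) drops out, collapsing the formula to the statement that $\A$ pairs to zero against every compactly supported test $p$-form; non-degeneracy then forces $\A \equiv 0$.

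Concretely, I would first substitute $\J = 0$ together with $A_{(0)} = A_{(\bd)} = A_{(\bdelta)} = A_{(n)} = 0$ into Proposition~\ref{Prop:A_with_f}. The first two terms of the right-hand side of (\ref{eq:INVprop1}) vanish because the source is zero, and the remaining four Cauchy-surface pairings vanish because each carries one of the four initial-data forms as an explicit factor. What remains is $\int_\stM \A\wedge *f = 0$ for every $f \in \Omega^p_0(\stM)$.

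Second I would argue pointwise. Suppose for contradiction that $\A(x_0) \neq 0$ for some $x_0 \in \stM$. The pointwise expression $\A(x)\wedge *f(x)$ at any $x$ is determined by the induced bilinear form on $\Lambda^p T^*_x\stM$, which is non-degenerate (though indefinite in Lorentzian signature). Hence there exists $\alpha_0 \in \Lambda^p T^*_{x_0}\stM$ such that $\A(x_0)\wedge *\alpha_0$ is a strictly positive multiple of the volume form at $x_0$. Extending $\alpha_0$ smoothly into a coordinate neighbourhood of $x_0$ and multiplying by a suitable non-negative bump function concentrated near $x_0$ produces $f \in \Omega^p_0(\stM)$ for which, by continuity, $\A\wedge *f$ is pointwise a positive multiple of the volume form throughout the support of the bump. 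This yields $\int_\stM \A\wedge *f > 0$, contradicting the conclusion of the first step, and so $\A = 0$.

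The proof is essentially immediate once Proposition~\ref{Prop:A_with_f} is in hand, so there is no serious obstacle. The only subtlety worth flagging is that in Lorentzian signature the induced pairing on $p$-forms is not positive definite, so one cannot simply take $f = \A$; instead one uses fiberwise non-degeneracy to construct a test form with the required sign by hand.
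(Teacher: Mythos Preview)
Your proposal is correct and follows exactly the same approach as the paper: apply Proposition~\ref{Prop:A_with_f} with $\J=0$ and vanishing Cauchy data to obtain $\int_\stM \A\wedge*f = 0$ for all compactly supported $f$, then conclude $\A=0$. The paper's proof states this in a single line and leaves the non-degeneracy step implicit, whereas you have spelled out the pointwise bump-function argument; your care about the indefiniteness of the Lorentzian pairing is well placed but not something the paper pauses over.
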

\begin{proof}
By Proposition~\ref{Prop:A_with_f} we have $\int_\stM \A\wedge*f = 0$ which
is true for all compactly supported $f$; therefore, $\A=0$.
\end{proof}

\begin{Prop}\label{prop:box_existence}{\rm (Existence of homogenous $\Box$ solutions)}
Let $A_{(0)}, A_{(\bd)} \in \Omega^p_0(\Sigma)$ and $A_{(n)}, A_{(\bdelta)}
\in\Omega^{p-1}_0(\Sigma)$ specify Cauchy data on $\Sigma$.  Then
\begin{equation}
\label{eq:Distributional_solution}
\A' = E\rho_{(\bd)}' A_{(0)}+E\rho_{(n)}' A_{(\bdelta)}
-E\rho_{(\bdelta)}' A_{(n)}-E\rho_{(0)}' A_{(\bd)}
\end{equation}
is the unique smooth solution of $\Box\A'=0$ with these data.
\end{Prop}
\begin{proof}
Let $f\in\Omega_0^p(\stM)$ be any compactly supported test form and consider
\begin{eqnarray}
\langle \A', f \rangle_\stM &=&\langle E\rho_{(\bd)}' A_{(0)},f\rangle_\stM
+\langle E\rho_{(n)}' A_{(\bdelta)},f\rangle_\stM-\langle E\rho_{(\bdelta)}'
A_{(n)},f\rangle_\stM-\langle E\rho_{(0)}' A_{(\bd)},f\rangle_\stM\nonumber\\
&=& \langle \rho_{(\bd)}' A_{(0)}, E'f\rangle_\stM
+\langle \rho_{(n)}' A_{(\bdelta)},E'f\rangle_\stM-\langle \rho_{(\bdelta)}'
A_{(n)},E' f\rangle_\stM-\langle \rho_{(0)}' A_{(\bd)}, E' f\rangle_\stM.
\end{eqnarray}
The operator $\Box = -(\bdelta\bd - \bd\bdelta)$ is self-adjoint, therefore
the transpose operator $E' = -E$ (see Choquet-Bruhat \cite{Choq67}, corollary
to Theorem II).
The pullback operators are all linear and continuous, thus there exist
transpose operators for the $\rho$'s denoted with the primes and we have
\begin{equation}
\langle \A', f \rangle_\stM = -\langle A_{(0)},  \rho_{(\bd)}E f\rangle_\Sigma
-\langle A_{(\bdelta)},\rho_{(n)}Ef\rangle_\Sigma+\langle A_{(n)},\rho_{(\bdelta)}
E f\rangle_\Sigma+\langle A_{(\bd)}, \rho_{(0)}E f\rangle_\Sigma.
\end{equation}
By Proposition~\ref{Prop:A_with_f}, if $\A$ is a solution to $\Box\A=0$
with the Cauchy data given above, then we have $\langle \A', f \rangle_\stM
= \langle\A,f\rangle_\stM$ which implies $\A'=\A$ in a distributional sense.
Thus $\A'$ is identified with the unique smooth solution $\A$.
\end{proof}

From this point forward we will be discussing systems whose initial data is
smooth and compactly supported on the Cauchy
surfaces, i.e., $\left(A_{(0)}, A_{(\bd)}, A_{(n)}, A_{(\bdelta)}\right)\in
\Omega^p_0(\Sigma)\oplus\Omega^p_0(\Sigma)\oplus\Omega^{p-1}_0(\Sigma)\oplus
\Omega^{p-1}_0(\Sigma)$.  We may now discuss the sense in which $\A'$, as
defined above, varies with respect to the Cauchy data.

\begin{Prop}$\A'$ is continuously dependent on the Cauchy data $\left(A_{(0)},
A_{(\bd)}, A_{(n)}, A_{(\bdelta)}\right)$.
\end{Prop}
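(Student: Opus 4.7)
The plan is to read continuity directly off the closed-form representation
$$\A' = -E\rho_{(\bd)}' A_{(0)} - E\rho_{(n)}' A_{(\bdelta)} + E\rho_{(\bdelta)}' A_{(n)} + E\rho_{(0)}' A_{(\bd)}$$
established in Proposition~\ref{prop:box_existence}, since this writes $\A'$ as a finite linear combination of fixed operators applied to the individual pieces of Cauchy data. Continuity of a finite sum of linear maps reduces to continuity of each summand, so it suffices to show that each of the four maps $A_{(\ast)} \mapsto E\rho'_{(\ast')} A_{(\ast)}$ is continuous between the appropriate topological vector spaces.

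The first step is topological bookkeeping. I would equip each of $\Omega_0^p(\Sigma)$ and $\Omega_0^{p-1}(\Sigma)$ with the standard LF topology on compactly supported smooth forms, and $\Omega^p(\stM)$ with its Fr\'echet $C^\infty$ topology. Each of the trace operators $\rho_{(0)}, \rho_{(\bd)}, \rho_{(\bdelta)}, \rho_{(n)}$ is built from the smooth pullback $i^*$, the exterior derivative $\bd$, the co-derivative $\bdelta$, and the Hodge star with respect to smooth metrics on $\stM$ and on $\Sigma$; each such building block is a continuous linear map between smooth-form spaces, hence so is every $\rho_{(\ast)}$. By general duality, the transposes $\rho'_{(\ast)}$ appearing in the formula are then continuous linear maps from the LF space on $\Sigma$ into the appropriate distributional dual on $\stM$. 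The propagator $E = E^- - E^+$ was stated in Section~\ref{sec:classical_field_eqs} to be sequentially continuous, and the causal support-preservation of $E^\pm$ allows its composition with each $\rho'_{(\ast)}$ to make sense.

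Combining these observations, each summand $A_{(\ast)} \mapsto E\rho'_{(\ast')} A_{(\ast)}$ is a sequentially continuous linear map into $\Omega^p(\stM)$, and taking the finite sum preserves sequential continuity, yielding the claim. The main point requiring care, and the step I expect to be the chief technical obstacle, is verifying that these compositions land in $\Omega^p(\stM)$ continuously in the $C^\infty$ topology rather than merely in distributions; this is handled by tracking supports through $E^\pm$ exactly as in the proofs of Propositions~\ref{Prop:A_with_f} and~\ref{prop:box_existence} (using that the data on $\Sigma$ are compactly supported, so that each $E\rho'_{(\ast)} A_{(\ast)}$ has a causally controlled support on $\stM$ and inherits smoothness from the already-established existence result), which then lifts the dual continuity of $\rho'_{(\ast)}$ and the sequential continuity of $E$ to the desired statement in the $C^\infty$ topology.
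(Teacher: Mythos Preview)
Your approach is genuinely different from the paper's. The paper does \emph{not} work from the explicit solution formula at all; instead it runs an abstract open-mapping argument in the style of B\"ar--Ginoux--Pf\"affle, Theorem~3.2.12. One observes that the solution-to-data map $\mathcal{P}:\A\mapsto(A_{(0)},A_{(\bd)},A_{(n)},A_{(\bdelta)})$ is linear and continuous, restricts it to data supported in a fixed compact $K\cap\Sigma$ and solutions in the corresponding closed subspace $\mathcal{V}_K^p\subset\mathcal{H}^p(\stM)$, and notes that on these Fr\'echet spaces $\mathcal{P}$ is a continuous linear bijection by existence and uniqueness. The open mapping theorem then makes $\mathcal{P}^{-1}$ continuous, and a compact-exhaustion step handles the LF topology. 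No analysis of $E$ acting on $\Sigma$-supported distributions is required.

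Your direct route is in principle viable, but the step you correctly flag as the ``chief technical obstacle'' is not resolved by the references you give. The objects $\rho'_{(\ast)}A_{(\ast)}$ are distributions on $\stM$ supported on the hypersurface $\Sigma$; they do not lie in $\Omega_0^p(\stM)$, so the sequential continuity of $E$ recorded in Section~\ref{sec:classical_field_eqs} (which is asserted only for smooth compactly supported test forms) does not apply as written. One must first extend $E^\pm$ to act on such distributions and establish continuity there, and then still upgrade distributional convergence of $\A'$ to $C^\infty$ convergence. Propositions~\ref{Prop:A_with_f} and~\ref{prop:box_existence} identify $\A'$ with a smooth solution for each fixed datum, but they contain no continuity statement that ``lifts'' dual continuity to the $C^\infty$ topology; for that you would need, e.g., a closed-graph argument or the finer estimates in B\"ar--Ginoux--Pf\"affle. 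The paper's open-mapping argument sidesteps both issues at once, which is its chief advantage; your approach, if completed, has the virtue of being constructive.
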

\begin{proof}
The proof is a generalization of Theorem 3.2.12 of B\"{a}r et.\ al.\ \cite{BGP07}.
Define $\mathcal{H}^p(\stM) \equiv \left\{ \A\in\Omega^p(\stM) | \Box\A=0 \right\}$
as the space of smooth homogeneous Klein-Gordon solutions,
then the mapping $\mathcal{P}:\mathcal{H}^p(\stM)\rightarrow\Omega^p(\Sigma)
\oplus\Omega^p(\Sigma)\oplus\Omega^{p-1}(\Sigma)\oplus\Omega^{p-1}(\Sigma)$ of
a solution to its Cauchy data is by definition both linear and continuous. Next,
let $K\subset\stM$ be a compact subset of $\stM$.  On $K$ we have the spaces
$\Omega_0^p(K) \subset \Omega^p(\stM)$ and $\Omega_0^p(K\cap\Sigma)\subset
\Omega^p(\Sigma)$ for all $p\le n$.  We also define the space $\mathcal{V}_K^p
= \mathcal{P}^{-1}\left[ \Omega^p_0(K\cap\Sigma)\oplus\Omega^p_0(K\cap\Sigma)
\oplus\Omega^{p-1}_0(K\cap\Sigma)\oplus\Omega^{p-1}_0(K\cap\Sigma)\right]$.
Since $\mathcal{P}$ is continuous and $\Omega^p_0(K\cap\Sigma)\oplus
\Omega^p_0(K\cap\Sigma)\oplus\Omega^{p-1}_0(K\cap\Sigma)\oplus\Omega^{p-1}_0
(K\cap\Sigma) \subset\Omega^p(\Sigma)\oplus \Omega^p(\Sigma)\oplus\Omega^{p-1}
(\Sigma)\oplus\Omega^{p-1}(\Sigma)$ is a closed subset, this implies
$\mathcal{V}_K^p\subset \mathcal{H}^p(\stM)$ is also a closed subset.
Furthermore, both $\Omega^p_0(K\cap\Sigma)\oplus \Omega^p_0(K\cap\Sigma)\oplus
\Omega^{p-1}_0(K\cap\Sigma)\oplus\Omega^{p-1}_0 (K\cap\Sigma)$ and
$\mathcal{V}_K^p$ are Fr\'{e}chet spaces. Also, the map $\mathcal{P}:
\mathcal{V}_K^p \rightarrow \Omega^p_0(K\cap\Sigma)\oplus\Omega^p_0(K\cap\Sigma)
\oplus \Omega^{p-1}_0(K\cap\Sigma)\oplus\Omega^{p-1}_0(K\cap\Sigma)$ is linear,
continuous, bijective, and by the open mapping theorem for Fr\'{e}chet
spaces (Theorem V.6 of Reed and Simon \cite{RSvI:80}) it is also an open
mapping.  Since a bijection that is open implies a continuous inverse, we
conclude that $\mathcal{P}^{-1}$ is continuous.

Finally, if we have a convergent sequence of Cauchy data $\left( A_{(0),i},
A_{(\bd),i}, A_{(n),i}, A_{(\bdelta),i}\right)\rightarrow \left(A_{(0)},
A_{(\bd)}, A_{(n)}, A_{(\bdelta)}\right)$ in $\Omega^p_0(\Sigma)\oplus
\Omega^p_0(\Sigma)\oplus\Omega^{p-1}_0(\Sigma)\oplus\Omega^{p-1}_0(\Sigma)$,
then we can choose a compact subset $K \subset \stM$ with the property
that $\mbox{supp}\left( A_{(0),i}\right)\cup\mbox{supp}\left( A_{(\bd),i}\right)
\cup \mbox{supp}\left( A_{(n),i}\right)\cup\mbox{supp}\left( A_{(\bdelta),i}
\right) \subset K$ for all $i$ and $\mbox{supp}\left( A_{(0)}\right)\cup
\mbox{supp}\left( A_{(\bd)}\right)\cup \mbox{supp}\left( A_{(n)}\right)\cup
\mbox{supp}\left( A_{(\bdelta)}\right)\subset K$.  Thus, $\left( A_{(0),i},
A_{(\bd),i}, A_{(n),i}, A_{(\bdelta),i}\right)\rightarrow \left(A_{(0)},
A_{(\bd)}, A_{(n)}, A_{(\bdelta)}\right)$ in $\Omega^p_0(K\cap\Sigma)\oplus
\Omega^p_0(K\cap\Sigma)\oplus\Omega^{p-1}_0(K\cap\Sigma)\oplus\Omega^{p-1}_0
(K\cap\Sigma)$ and we conclude that the inverse mapping $\mathcal{P}^{-1}
\left( A_{(0),i},A_{(\bd),i}, A_{(n),i}, A_{(\bdelta),i}\right)\rightarrow
\mathcal{P}^{-1}\left(A_{(0)},A_{(\bd)}, A_{(n)}, A_{(\bdelta)}\right)$.
\end{proof}

We can summarize the three above results in a single theorem regarding the
Cauchy problem for the Klein-Gordon Equation;

\begin{Thm}\label{prop:_Cauchy for Box}{\rm (Well-posed Cauchy problem for $\Box\A=0$.)}
Let $\stM$ be a globally hyperbolic Lorentzian spacetime with Cauchy
surface $\Sigma\subset\stM$. Let $(A_{(0)}, A_{(\bd)}, A_{(n)},
A_{(\bdelta)})\in\Omega^p_0(\Sigma)\oplus\Omega^p_0(\Sigma)\oplus
\Omega^{p-1}_0(\Sigma)\oplus \Omega^{p-1}_0(\Sigma)$ specify initial
data on $\Sigma$.  Then, $\A'$ given by Equation~\ref{eq:Distributional_solution} is
the unique smooth solution to $\Box\A'=0$ which satisfies
\begin{equation}
\rho_{(0)}\A' = A_{(0)},\qquad\rho_{(\bd)}\A' = A_{(\bd)},\qquad
\rho_{(n)}\A' = A_{(n)},\qquad\mbox{and}\qquad\rho_{(\bdelta)}\A' = A_{(\bdelta)}.
\end{equation}
Furthermore, $\A'$ is continuously dependent on the Cauchy data.
\end{Thm}
This theorem is the generalization to all $p$-forms of Furlani's Theorem~1
for 1-forms \cite{Furl99}.

\subsection{Subspaces of Solutions to $\Box\A = 0$}

It turns out that the  solution space to the homogenous Klein-Gordon
equation is extremely rich.  In fact, it is too large for
the purposes of solving the Maxwell equation.  Amongst all the possible solutions
to the homogenous Klein-Gordon equation, it is possible to identify the subspace
of solutions to the constrained Klein-Gordon system~\ref{eq:_Constr. KG} by its
Cauchy data;

\begin{Prop}\label{thm_Maxwell_solns} {\rm (Maxwell Solutions)}
Suppose $\A\in\Omega^p(\stM)$ solves $\Box\A = 0$ with Cauchy data
$(A_{(0)}, A_{(\bd)}, A_{(n)}, A_{(\bdelta)})$; then $-\bdelta\bd\A = 0$
if and only if $\bdelta A_{(\bd)} = 0$ and $\bd A_{(\bdelta)}=0$.
\end{Prop}
\begin{proof}
First, suppose that $\A$ solves both $\Box\A = 0$ and $-\bdelta\bd\A=0$,
and evaluate the two conditions,
\begin{eqnarray}
\bdelta A_{(\bd)} &=& \bdelta \rho_{(\bd)} \A = \bdelta \rho_{(n)}\bd\A
= (-1)^{n(p+1)}\rho_{(n)}\bdelta\bd\A = 0,\\
\bd A_{(\bdelta)} &=& \bd \rho_{(\bdelta)} \A = \bd \rho_{(0)} \bdelta\A
= \rho_{(0)}\bd\bdelta\A = - \rho_{(0)}\bdelta\bd\A = 0.
\end{eqnarray}
Alternatively, suppose we have Cauchy data $(A_{(0)}, A_{(\bd)}, A_{(n)},
A_{(\bdelta)})$ with $\bdelta A_{(\bd)} = 0$ and $\bd A_{(\bdelta)}=0$. By
Theorem~\ref{prop:_Cauchy for Box}, there exists a unique
$\A$ which is a solutions to $\Box\A = 0$.  Next, let $f\in\Omega^p_0(\stM)$,
then by Proposition~\ref{Prop:A_with_f} we can evaluate
\begin{eqnarray}
\langle -\bdelta\bd\A,f\rangle_\stM &=& -\langle \A, \bdelta\bd f\rangle_\stM
\nonumber\\
&=&\langle A_{(0)},\rho_{(\bd)} E\bdelta\bd f\rangle_\Sigma
+\langle A_{(\bdelta)},\rho_{(n)} E\bdelta\bd f\rangle_\Sigma
-\langle A_{(n)},\rho_{(\bdelta)} E\bdelta\bd f\rangle_\Sigma
-\langle A_{(\bd)},\rho_{(0)} E\bdelta\bd f\rangle_\Sigma\nonumber\\
&=&-\langle A_{(0)},\rho_{(\bd)} \bd\bdelta E f\rangle_\Sigma
+\langle A_{(\bdelta)},\rho_{(n)} \bdelta\bd E f\rangle_\Sigma
-\langle A_{(n)},\rho_{(\bdelta)} \bdelta\bd E f\rangle_\Sigma
+\langle A_{(\bd)},\rho_{(0)} \bd\bdelta E f\rangle_\Sigma,
\end{eqnarray}
where we have made use of $\Box\A = 0$ in the first and
fourth terms.  The first and third terms vanish because $\rho_{(\bd)}\bd = 0$
and $\rho_{(\bdelta)}\bdelta = 0$.  For the remaining two terms, we
commute the pullback operators with the first derivative operation to
obtain
\begin{equation}
\langle -\bdelta\bd\A,f\rangle_\stM = (-1)^{n(p+1)}\langle \bd A_{(\bdelta)}
, \rho_{(\bd)} E f \rangle_\Sigma + \langle \bdelta A_{(\bd)}, \rho_{(\bdelta)}
E f \rangle_\Sigma = 0,
\end{equation}
where we have made use of the properties of the Cauchy data.  This is true
for all $f$, so we conclude $-\bdelta\bd\A=0$.
\end{proof}

The above choice of Cauchy data is sufficient to
ensure $\bd\bdelta\A=0$, but this does not imply that $\bdelta\A=0$. The
subspace of Lorenz solutions, contained within the space of Maxwell solutions,
requires further restrictions on the Cauchy data;

\begin{Prop}{\rm (Lorenz solutions)}\label{prop:Lorenz_soln}
Suppose $\A\in\Omega^p(\stM)$ solves $\Box\A=0$ with Cauchy data $\left ( A_{(0)},
A_{(\bd)}, A_{(n)}, A_{(\bdelta)} \right)$, then $\bdelta\A = 0$ if and only if
$\bdelta A_{(\bd)} = 0$, $\bdelta A_{(n)} = 0$, and $A_{(\bdelta)}=0$.
\end{Prop}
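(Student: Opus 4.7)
The plan is to reduce the biconditional to the uniqueness result of Proposition~\ref{prop_uniqueness}, applied to the form $\bdelta\A\in\Omega^{p-1}(\stM)$. First I would observe that $\Box$ and $\bdelta$ commute: since $\Box=-(\bd\bdelta+\bdelta\bd)$ and $\bdelta^2=0$, one has $\bdelta\Box=-\bdelta\bd\bdelta=\Box\bdelta$. Therefore $\Box\A=0$ forces $\Box(\bdelta\A)=\bdelta(\Box\A)=0$, so $\bdelta\A$ is a smooth homogeneous Klein--Gordon solution on $(p-1)$-forms. By Proposition~\ref{prop_uniqueness} it vanishes identically if and only if all four entries of its Cauchy data vanish.

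The main task is to express those four entries in terms of the Cauchy data of $\A$. Two of them are immediate from the definitions: $\rho_{(0)}(\bdelta\A)=i^*\bdelta\A=A_{(\bdelta)}$, and $\rho_{(\bdelta)}(\bdelta\A)=i^*\bdelta^2\A=0$. For the remaining two, I would unpack the definitions of $\rho_{(\bd)}$ and $\rho_{(n)}$, move the ambient codifferential past $i^*$ using naturality of pullback under $\bd$ ($i^*\bd=\bd i^*$), and then convert $\bdelta=\pm *\bd *$ on $\stM$ into $\bdelta=\pm *\bd *$ on $\Sigma$ using $*^2=\pm\id$ on both manifolds. For $\rho_{(\bd)}$ the hypothesis $\Box\A=0$ is used once, to trade $\bd\bdelta\A=-\bdelta\bd\A$. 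The outcome I expect is the clean pair of identities
\begin{equation*}
\rho_{(\bd)}(\bdelta\A)=\pm\,\bdelta\,A_{(\bd)},\qquad \rho_{(n)}(\bdelta\A)=\pm\,\bdelta\,A_{(n)},
\end{equation*}
where the $\bdelta$ on each right-hand side is the codifferential intrinsic to $\Sigma$ (the same abuse of notation as in the statement of the Proposition).

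Granted these four relations, both directions of the biconditional follow at once. For the ``if'' direction, the three hypotheses $A_{(\bdelta)}=0$, $\bdelta A_{(\bd)}=0$, $\bdelta A_{(n)}=0$ kill all four pieces of the Cauchy data of $\bdelta\A$, so $\bdelta\A=0$ by Proposition~\ref{prop_uniqueness}. For the ``only if'' direction, $\bdelta\A=0$ on $\stM$ means its Cauchy data are all zero, which via the four identities above yields $A_{(\bdelta)}=0$, $\bdelta A_{(\bd)}=0$, and $\bdelta A_{(n)}=0$.

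The hard part will be the sign bookkeeping in the two nontrivial Cauchy-data identities: the prefactors in the definitions of $\rho_{(\bd)}$ and $\rho_{(n)}$, the sign in $\bdelta=\pm *\bd *$ (which depends on $p$, $n$, and the Lorentzian signature), and the two square-of-Hodge-star signs on $\stM$ and on $\Sigma$ must all be tracked carefully. Fortunately only the vanishing of the right-hand sides, not the overall sign, is needed for the biconditional, so a heavy sign calculation can be avoided if one is willing to write the identities up to sign.
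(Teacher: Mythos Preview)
Your plan is correct and close in spirit to the paper's proof, but the packaging differs in a way worth noting. For the ``only if'' direction the paper does exactly what you propose: it computes $\bdelta A_{(\bd)}$, $\bdelta A_{(n)}$, and $A_{(\bdelta)}$ directly, using the identity $\bdelta\rho_{(n)}=(-1)^{np}\rho_{(n)}\bdelta$ (equivalently your sign-tracked $\rho_{(n)}(\bdelta\A)=\pm\bdelta A_{(n)}$) together with $\rho_{(\bd)}=\rho_{(n)}\bd$ and the wave equation. For the ``if'' direction, however, the paper does not invoke Proposition~\ref{prop_uniqueness} on $\bdelta\A$; instead it goes back one level to Proposition~\ref{Prop:A_with_f}, pairing $\A$ with $\bd f$ for an arbitrary $f\in\Omega_0^{p-1}(\stM)$, commuting $\bd$ past $E$, and killing each boundary term on $\Sigma$ with one of the three hypotheses. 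Your route is a genuine streamlining: once you have computed all four pieces of the Cauchy data of $\bdelta\A$ (and you do need $\Box\A=0$ to get $\rho_{(\bd)}(\bdelta\A)=\rho_{(n)}\bd\bdelta\A=-\rho_{(n)}\bdelta\bd\A=\pm\bdelta A_{(\bd)}$), both implications drop out of Proposition~\ref{prop_uniqueness} symmetrically, and you avoid re-running the integral-formula argument. Your remark that only vanishing, not the precise sign, is needed is also well taken and matches how the paper uses these identities.
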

\begin{proof}
First suppose that $\A$ solves both $\Box\A = 0$ and $\bdelta\A =0$ and let us
evaluate the three conditions above.  For the forward normal derivative we have
\begin{equation}
\bdelta A_{(\bd)} = \bdelta \rho_{(\bd)}\A = \bdelta \rho_{(n)}\bd \A.
\end{equation}
Using the identity for $p$-forms that $\bdelta\rho_{(n)} = (-1)^{np} \rho_{(n)}\bdelta $
we find
\begin{equation}
\bdelta A_{(\bd)} = (-1)^{n(p+1)}\rho_{(n)}\bdelta\bd\A = -(-1)^{n(p+1)}\rho_{(n)}\bd\bdelta\A=0,
\end{equation}
where we have used the wave equation and the subsidiary condition in the last two
steps. For the divergence of the forward normal we find
\begin{equation}
\bdelta A_{(n)} = \bdelta\rho_{(n)} \A = (-1)^{np}\rho_{(n)}\bdelta\A =0.
\end{equation}
Finally, for the pullback of the divergence we have
\begin{equation}
A_{(\bdelta)} = \rho_{(0)}\bdelta\A = 0.
\end{equation}

On the other hand, assume $\A$ is a Klein-Gordon solution whose Cauchy
data satisfies $\bdelta A_{(\bd)} = 0=\bdelta A_{(n)}$ and $A_{(\bdelta)}=0$.
Let $f\in\Omega_0^{p-1} (\stM)$ and evaluate
\begin{eqnarray}
\langle \bdelta\A , f \rangle_\stM &=& \langle \A, \bd  f \rangle_\stM\nonumber\\
&=& -\langle A_{(0)}, \rho_{(\bd)} E \bd f \rangle_\Sigma
- \langle A_{(\bdelta)}, \rho_{(n)} E \bd f \rangle_\Sigma
+\langle A_{(\bd)}, \rho_{(0)} E \bd f \rangle_\Sigma
+ \langle A_{(n)}, \rho_{(\bdelta)} E \bd f \rangle_\Sigma\nonumber\\
&=& -\langle A_{(0)}, \rho_{(\bd)} \bd  E f \rangle_\Sigma
- \langle 0, \rho_{(n)} \bd  E f \rangle_\Sigma
+\langle A_{(\bd)}, \rho_{(0)} \bd E f \rangle_\Sigma
+ \langle A_{(n)}, \rho_{(\bdelta)} \bd E f \rangle_\Sigma,
\end{eqnarray}
where we have used the fact that $\bd$ commutes with $E$ on forms of compact
support. The first term vanishes because $\rho_{(\bd)}\bd = 0$. The second
term trivially vanishes. For the third term we have that $\rho_{(0)}$ and
$\bd$ commute. For the fourth term we use $\rho_{(\bdelta)} = \rho_{(0)}\bdelta$
and the fact that $E f$ is a solution to the homogeneous Klein-Gordon equation
to swap the order of the derivative operators. Therefore
\begin{equation}
\langle \bdelta\A , f \rangle_\stM =
\langle A_{(\bd)}, \bd \rho_{(0)} E f \rangle_\Sigma
-\langle  A_{(n)}, \rho_{(0)} \bd\bdelta E f \rangle_\Sigma\
=
\langle  \bdelta A_{(\bd)}, \rho_{(0)}  E f \rangle_\Sigma
- \langle \bdelta A_{(n)}, \rho_{(0)} \bdelta E f \rangle_\Sigma
=0,
\end{equation}
where we have used the conditions on the initial data.
Since this is true for all $f$ we deduce that $\bdelta\A = 0$.
\end{proof}

Contained within the Lorenz solutions are the Coulomb solutions which have
Lorenz solution Cauchy data with the additional constraint $A_{(n)}=0$,
or more succinctly,  $(A_{(0)}, A_{(\bd)}, 0 ,0)$ with $\bdelta A_{(\bd)}
= 0$.  Coulomb solutions are particulary useful as
representatives of the equivalence class of physically realizable field
configurations.

By Proposition~\ref{thm_Maxwell_solns} we can infer that the constrained
Klein-Gordon system is self consistent in curved spacetimes of arbitrary
dimension. A slightly different way to see this is to begin with the
evolution equation $\Box\A = \J$ and take $\bdelta$ of it, yielding
\begin{equation}
0 = \bdelta\J = \bdelta\Box\A = \Box\bdelta\A.
\end{equation}
We observe that $\bdelta\A$ satisfies the source free Klein-Gordon equation.
If the Cauchy data for $\bdelta\A$ vanishes on the initial Cauchy
surface then by Proposition~\ref{prop_uniqueness} the unique solution
is $\bdelta\A=0$.  Similar arguments also hold for Maxwell and Coulomb
gauge solutions. For Lorenz solutions, what ensures this property are
the Riemann and Ricci terms in Eq.~(\ref{eq:KGincomponents}).  They commute
with the divergence in the proper way as a result of the first and second
Bianchi identities.  Unlike the results of Buchdahl and Higuchi for spinor
and massive symmetric tensor fields, no other conditions are required
of the spacetime for the $p$-form fields satisfying the generalized
Maxwell equation.  In fact, the spacetime does not need to satisfy
the Einstein equations in any way.  At a deeper level
what we have really done is to take the flat space field equations and first
make the minimal substitution.  Only afterward do we then commute
the covariant derivatives.  This gives rise to all of the Riemann and
Ricci terms.  The beauty of using exterior calculus is all this is handled
without our having to do it explicitly.

In addition to the Lorenz solutions, the Maxwell solutions also contains
a subspace of the closed solutions which can be identified by their
Cauchy data;

\begin{Prop} \label{thm: gauge_soln}{\rm (Closed Solutions)}
Suppose $\A\in\Omega^p(\stM)$ solves $\Box\A = 0$ with Cauchy data
$(A_{(0)}, A_{(\bd)}, A_{(n)}, A_{(\bdelta)})$; then $\A\in{\rm ker}\,
\bd_p$ on $\stM$ if and only if $\bd A_{(0)} = 0$, $A_{(\bd)}=0$
and $\bd A_{(\bdelta)}=0$.
\end{Prop}

\begin{proof}
Let $\A\in{\rm ker}\,\bd_p$ be a Klein-Gordon solution, then $\A$ is
also a Maxwell solution.  Next, evaluate the three conditions;
\begin{eqnarray}
\bd A_{(0)} &=& \bd\rho_{(0)}\A = \rho_{(0)}\bd\A = 0,\nonumber\\
 A_{(\bd)} &=& \rho_{(n)}\bd\A = 0,\nonumber\\
\bd A_{(\bdelta)} &=& \bd\rho_{(0)}\bdelta\A = \rho_{(0)}\bd\bdelta\A
=-\rho_{(0)}\bdelta\bd\A = 0.
\end{eqnarray}

Alternatively, assume Cauchy data $(A_{(0)}, 0, A_{(n)}, A_{(\bdelta)})$
where $\bd A_{(0)} = 0 =\bd A_{(\bdelta)}$.  Then, there exists an unique
$\A$ that solves $\Box\A = 0$ with this Cauchy data.  By proposition~%
\ref{thm_Maxwell_solns} above, it is also a Maxwell solution. Additionally,
let $g\in\Omega^{p+1}_0(\stM)$ and evaluate
\begin{eqnarray}
\langle \bd \A , g\rangle_\stM &=& \langle \A, \bdelta g \rangle_\stM \nonumber\\
&=& -\langle A_{(0)},\rho_{(\bd)}E\bdelta g \rangle_\Sigma -\langle A_{(\bdelta)}
,\rho_{(n)} E \bdelta g \rangle_\Sigma + \langle A_{(n)}, \rho_{(\bdelta)}
E\bdelta g\rangle_\Sigma + \langle 0, \rho_{(0)} E\bdelta g\rangle_\Sigma
\nonumber\\
&=& -\langle A_{(0)},\rho_{(n)}\bd\bdelta E g \rangle_\Sigma -
\langle A_{(\bdelta)},\rho_{(n)} \bdelta E g \rangle_\Sigma + \langle
A_{(n)}, \rho_{(\bdelta)}\bdelta E g\rangle_\Sigma\nonumber\\
&=&\langle A_{(0)}, \rho_{(n)}\bdelta\bd E g\rangle_\Sigma
-\langle A_{(\bdelta)},\rho_{(n)}\bdelta E g\rangle_\Sigma\nonumber\\
&=& (-1)^{n(p+2)}\langle \bd A_{(0)} , \rho_{(n)} \bd E g\rangle_\Sigma-
(-1)^{n(p+1)}\langle \bd A_{(\bdelta)}, \rho_{(n)} E g\rangle_\Sigma\nonumber\\
&=& 0,
\end{eqnarray}
which is true for all $g$, implying that $\bd \A =0$.  Therefore,
$\A\in{\rm ker}\, \bd_p$ on the spacetime $\stM$.
\end{proof}

\noindent{\it Remarks}:  There does not appear to be any trivial way
to further distinguish the space of exact solutions within the space
of closed solutions.  However, it is possible to identify the ``traditional''
harmonic solutions which satisfy $\Box\A=0$ with $\bd\A=0$ and $\bdelta\A=0$.
\begin{Prop}{\rm (``Traditional'' Harmonic Solutions)}
Suppose $\A\in\Omega^p(\stM)$ solves $\Box\A=0$ with Cauchy data
$(A_{(0)}, A_{(\bd)}, A_{(n)}, A_{(\bdelta)})$; then $\bd\A=\bdelta\A=0$
if and only if $\bd A_{(0)}=0$, $A_{(\bd)}=0$, $\bdelta A_{(n)}=0$ and $A_{(\bdelta)}=0$.
\end{Prop}
\begin{proof}
First, supposed $\A$ satifies $\bd\A=0$ and $\bdelta\A=0$, then $\A$ trivially satisfies the
Klein-Gordon and Maxwell equations.  We evaluate
\begin{eqnarray}
\bd A_{(0)} &=& \bd\rho_{(0)}\A = \rho_{(0)}\bd\A = 0,\nonumber\\
A_{(\bd)} &=& \rho_{(n)}\bd\A = 0,\nonumber\\
\bdelta A_{(n)} &=& \bdelta\rho_{(n)}\A = (-1)^{np}\rho_{(n)}\bdelta\A = 0,
\nonumber\\
\A_{(\bdelta)} &=& \rho_{(0)}\bdelta\A = 0.
\end{eqnarray}

Alternatively, assume Cauchy data $(A_{(0)}, 0, A_{(n)}, 0)$ where $\bd A_{(0)} = 0$
and $\bdelta A_{(n)} =0$.  Then, there exists an unique $\A$ that solves $\Box\A = 0$
with this Cauchy data.  By Proposition~\ref{prop:Lorenz_soln} above, it is also a
Lorenz solution with $\bdelta\A=0$.  By Proposition~\ref{thm: gauge_soln},
$\A\in{\rm ker}\, \bd_p$, i.e., $\bd\A=0$.
\end{proof}

\begin{figure}[]
      \begin{center}
      \psset{xunit=14mm,yunit=12mm,runit=14mm}
      \begin{pspicture}(-3,-2.5)(3,3)

      \psellipse[linewidth=2pt](0,.5)(3.5,3)
      \psellipse[linewidth=2pt,fillstyle=vlines,hatchcolor=green](1,0)(2,2)
      \psellipse[linewidth=2pt,fillstyle=hlines,hatchcolor=orange](-1,0)(2,2)
      \psellipse[linewidth=2pt](1,0)(2,2)

      \rput*(2,0){{\begin{tabular}{c}\bf Lorenz\\  $(\bdelta\A=0)$ \end{tabular}}}
      \rput*(-2,0){{\begin{tabular}{c}\bf Closed\\  $(\bd\A=0)$ \end{tabular}}}
      \rput*(0,0){{\begin{tabular}{c}\bf Traditional\\ \bf Harmonic\\ \bf Solutions\end{tabular}}}
      \rput(0,2.6){{\begin{tabular}{c}\bf Maxwell Solutions\\ $(\Box\A=0\mbox{ with }\bd\bdelta\A=0)$ \end{tabular}}}

      \end{pspicture}
      \end{center}
      \begin{caption}
      {A graphical representation of the various solutions spaces to the Maxwell equations.  Not shown are the
      Coulomb solutions which would be contained within the Lorenz solutions.}
      \label{fig:solution_diagram}
      \end{caption}
\end{figure}

We caution the reader, the well known result in Hodge-DeRham theory \cite{AMR88} that
$\Delta\alpha = 0$ iff $\bdelta\alpha = 0$ and $\bd\alpha=0$, for $\alpha$ a p-form on a
compact Riemannian manifold is not generally true for a p-form on a Lorentzian spacetime, i.e.,
$\Box\A = 0$ does not immediately imply $\bdelta\A = 0$ and $\bd\A=0$, as evidenced in the cases
above.
The multiple subspaces that are solutions to the Maxwell equation are graphically represented
in Fig.~\ref{fig:solution_diagram}.

\subsection{Cauchy problem for $-\bdelta\bd\A = 0$}

Because of the gauge freedom in the Maxwell equation, the space of Maxwell solutions is still
too large.  It turns out that to specify a unique field-strength $\F = \bd A$, we only need a
fraction of the entire Maxwell solutions.  It is found that we only need to use a reduced set of
Cauchy-data, $(A_{(0)}, A_{(\bd)})$.  For this reduced Cauchy data we are prepared to discuss the
existence of solutions to the Maxwell equation;

\begin{Prop}{\rm (Existence of Homogenous Maxwell Solutions)}
For any $( A_{(0)}, A_{(\bd)} )\in\Omega^{p}_0(\Sigma)\oplus\Omega^{p}_0(\Sigma)$
with $\bdelta A_{(\bd)} = 0$, there exists an $\A\in\Omega^p(\stM)$ such that
\begin{equation}
-\bdelta\bd\A = 0,\qquad\qquad\rho_{(0)}\A = A_{(0)}\qquad\mbox{and}\qquad
\rho_{(\bd)}\A = A_{(\bd)}.
\end{equation}
\end{Prop}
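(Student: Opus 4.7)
The plan is to reduce the Maxwell equation $-\bdelta\bd\A = 0$ to the Klein--Gordon equation $\Box\A = 0$ together with the Lorenz gauge condition $\bdelta\A = 0$, and then invoke the existence results already established for $\Box$. Since the hypothesis supplies only the two pullbacks $A_{(0)}$ and $A_{(\bd)}$, I need to extend these to full Cauchy data for $\Box$ in a way that both encodes the Lorenz gauge and matches the prescribed initial values.

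The natural choice is the Coulomb gauge data of Corollary~\ref{cor:Coulomb_solns}: extend the given pair to $(A_{(0)}, A_{(\bd)}, 0, 0)$ by setting $A_{(n)} = 0$ and $A_{(\bdelta)} = 0$. These lie in the appropriate compactly supported form spaces on $\Sigma$, so Proposition~\ref{prop:box_existence} produces a unique smooth $\A \in \Omega^p(\stM)$ solving $\Box\A = 0$ with exactly this Cauchy data. By construction, $\rho_{(0)}\A = A_{(0)}$ and $\rho_{(\bd)}\A = A_{(\bd)}$, so two of the three requirements in~(\ref{eq_Maxwell}) hold immediately.

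It remains to promote this $\Box$-solution to a Maxwell solution. The hypothesis $\bdelta A_{(\bd)} = 0$, combined with the trivial identities $\bdelta A_{(n)} = \bdelta 0 = 0$ and $A_{(\bdelta)} = 0$, is precisely the input required by Proposition~\ref{prop:Lorenz_soln}, which then yields $\bdelta\A = 0$ throughout $\stM$. Expanding $\Box = -(\bdelta\bd + \bd\bdelta)$ and using $\bd\bdelta\A = 0$ gives
\begin{equation}
0 = \Box\A = -\bdelta\bd\A - \bd\bdelta\A = -\bdelta\bd\A,
\end{equation}
which is the remaining condition in~(\ref{eq_Maxwell}).

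There is no real obstacle here beyond bookkeeping; the genuinely hard work — existence, uniqueness, and continuous dependence for $\Box$, plus the Lorenz gauge characterization of constrained Klein--Gordon solutions — has already been discharged in Propositions~\ref{prop:box_existence} and~\ref{prop:Lorenz_soln}. The only point requiring a moment of care is verifying that the chosen extension $(A_{(0)}, A_{(\bd)}, 0, 0)$ simultaneously lies in the correct function spaces on $\Sigma$ and satisfies all three hypotheses of Proposition~\ref{prop:Lorenz_soln}, which it does trivially once $\bdelta A_{(\bd)} = 0$ is assumed.
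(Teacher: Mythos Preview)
Your proof is correct and follows essentially the same approach as the paper: extend the given data to full Cauchy data for $\Box$, invoke Proposition~\ref{prop:box_existence} for existence, and then use Proposition~\ref{prop:Lorenz_soln} to upgrade the $\Box$-solution to a Lorenz (hence Maxwell) solution. The only cosmetic difference is that the paper allows an arbitrary co-closed $A_{(n)}$ in the extension, whereas you take the Coulomb-gauge choice $A_{(n)}=0$; the paper explicitly notes this special case (attributed to Dimock) suffices.
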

\begin{proof}
This is a broader proof than that given by Proposition 2.10 of Pfenning (original version
of this manuscript), which was itself a generalization of the proof given for Dimock's
Proposition~2. We add additional constraints and show that a solution still exists.
Choose any $( A_{(n)}, A_{(\bdelta)} )\in\Omega^{p-1}_0(\Sigma)\oplus\Omega^{p-1}_0
(\Sigma)$ with $\bd A_{(\bdelta)}=0.$  We now have Cauchy data
\begin{equation}
(A_{(0)}, A_{(\bd)},A_{(n)}, A_{(\bdelta)})\qquad\mbox{with}\qquad
\bdelta A_{(\bd)} = 0\qquad\mbox{and}\qquad \bd A_{(\bdelta)}=0.
\end{equation}
By Theorem~\ref{prop:_Cauchy for Box}, there exists an unique $\A\in\Omega^p
(\stM)$ which is a $\Box\A = 0$ solution with this Cauchy data.  Furthermore, by
proposition~\ref{thm_Maxwell_solns}, $\A$ also satisfies $-\bdelta\bd\A = 0$.
\end{proof}

For scalar fields $\A\in\Omega^0(\stM)$, this is the end of the story, as the
only non-zero portions of the Cauchy data are $( A_{(0)}, A_{(\bd)} )\in
\Omega^{0}_0(\Sigma)\oplus\Omega^{0}_0(\Sigma)$.  Thus, every $0$-form
solution to the generalized Maxwell equation, and every resulting field
strength $\F=\bd\A$ is uniquely determined by the reduced Cauchy data alone.
It is straightforward to see from the text after Proposition~\ref{prop:Lorenz_soln}
that every $0$-form solution is, in fact, a Coulomb solution.

Unfortunately, the same is not true for $p$-form theories with rank $p\geq1$;
specifying only the reduced Cauchy data $( A_{(0)}, A_{(\bd)} )$
destroys uniqueness, which requires the specification of all four elements.
However, $( A_{(0)}, A_{(\bd)} )$ is sufficient to specify an equivalence
class of solutions, for which every element of the equivalence class yields
the same field strength $\F$.

The traditional gauge freedom in electromagnetism is that two vector
potentials are considered equivalent if they differ by the gradient of a
scalar.  For the $p$-form theory in curved spacetime, two Maxwell solutions
$\A,\A' \in\Omega^p(\stM)$ are gauge-equivalent, denoted $\A'\sim\A$, if
they differ by an exact form $\bd\lambda$ where $\lambda\in\Omega^{p-1}(\stM)$.
An equivalence class is denoted by $[\A]$ where $\A$ is a representative
of the class. It is easily shown that all elements of an equivalence class
lead to the same field strength, i.e., if $\A,\A' \in[\A]$ then
\begin{equation}
\F' = \bd\A' = \bd(\A+ \bd\lambda) = \bd\A = \F.
\end{equation}
The equivalence of Maxwell solutions under the quotienting schemes can
be related to the Cauchy data of the solutions.  Here we give the correction of
Proposition~2.13 of the original version of this manuscript;
\begin{Prop}{\rm (Gauge Equivalence of Maxwell Solutions)}\label{prop: Trad gauge}\\
\indent(a) Let $( A_{(0)}, A_{(\bd)} )\in\Omega^{p}_0(\Sigma)\oplus\Omega^{p}_0(\Sigma)$
with $\bdelta A_{(\bd)} = 0$ specify reduced Cauchy data on $\Sigma$.  If $\A,\A'
\in\Omega^p(\stM)$ are Maxwell solutions with this data, then $\A\sim\A'$.

(b) Let $(A_{(0)},A_{(\bd)})$ and $(A'_{(0)},A'_{(\bd)})$ with $\bdelta A_{(\bd)}
= 0 = \bdelta A'_{(\bd)}$ specify Cauchy data on a common Cauchy surface $\Sigma$.
If $\A,\A' \in\Omega^p(\stM)$ are Maxwell solutions with these Cauchy data, then
$\A'\sim\A$ if and only if $A'_{(0)}\sim A_{(0)}$, $A'_{(\bd)}=A_{(\bd)}$.
\end{Prop}
\begin{proof}
(a) In terms of the complete Cauchy data on $\Sigma$, any Maxwell solution will have
\begin{equation}
\A \mapsto (A_{(0)},A_{(\bd)},A_{(n)},A_{(\bdelta)})
\qquad\mbox{ and }\qquad
\A' \mapsto (A_{(0)},A_{(\bd)},A'_{(n)},A'_{(\bdelta)})
\end{equation}
with the constraints $\bdelta A_{(\bd)}=0$ and $\bd A_{(\bdelta)}=0=\bd A'_{(\bdelta)}$.

On the Cauchy surface $\Sigma$, we specify complete initial data $(0, 0, A'_{(n)}-A_{(n)},
A'_{(\bdelta)}-A_{(\bdelta)})\in\Omega^{p}_0(\Sigma)\oplus\Omega^{p}_0(\Sigma)\oplus
\Omega^{p-1}_0(\Sigma)\oplus\Omega^{p-1}_0(\Sigma)$.
By Proposition~\ref{prop:_Cauchy for Box}, there exists a unique smooth $\B\in\Omega^{p}
(\stM)$ which is a solution to $\Box\B=0$ with this Cauchy data.  By
Propositions~\ref{thm_Maxwell_solns} and \ref{thm: gauge_soln}, $\B$ is a closed solution
to the Maxwell equation.

Define $\widetilde{\A}= \A +\B$, which is also a Maxwell solution, and evaluate its Cauchy data;
\begin{eqnarray}
\rho_{(0)} \widetilde{\A} &=& \rho_{(0)}(\A+\B) = A_{(0)} + \rho_{(0)}\B
= A_{(0)},\nonumber\\
\rho_{(\bd)} \widetilde{\A} &=& \rho_{(\bd)}(\A+\B) = A_{(\bd)} + \rho_{(\bd)}
\B = A_{(\bd)},\nonumber\\
\rho_{(n)} \widetilde{\A} &=& \rho_{(n)}(\A + \B) = A_{(n)} + \rho_{(n)}\B
= A_{(n)}+A'_{(n)}-A_{(n)} = A'_{(n)},\nonumber\\
\rho_{(\bdelta)} \widetilde{\A} &=& \rho_{(\bdelta)}(\A+\B) = A_{(\bdelta)} +
\rho_{(\bdelta)}\B = A_{(\bdelta)} + A'_{(\bdelta)}-A_{(\bdelta)} =
A'_{(\bdelta)}.
\end{eqnarray}
We immediately see that $\A'$ and $\widetilde{\A}$ are both Maxwell solutions with identical
complete Cauchy data, and by uniqueness of homogenous Klein-Gordon solutions $\A' = \widetilde{\A}
= \A +\B$. Setting the $p$-th deRham cohomology group for the spacetime, $H^p(\stM)$, to be trivial,
we can then conclude that $\B$ is exact.  More specifically, there exists a $\lambda\in\Omega^{p-1}
(\stM)$ such that $\B=\bd\lambda$.  Upon substitution, we have $\A' = \A +\bd\lambda$, or
more simply, $\A'\sim\A$.

(b)Let $\lambda\in\Omega^{p-1}(\stM)$ and set $\A' = \A + \bd\lambda$;  then it
immediately follows $A'_{(0)}\sim A_{(0)}$ and $A'_{(\bd)}=A_{(\bd)}$.

Alternatively,  assume $A'_{(0)} = A_{(0)} +\bd \alpha$ and $A'_{(\bd)}=A_{(\bd)}$,
where $\alpha\in\Omega^{p-1}_0(\Sigma)$.  By part (a) of this proposition, $\A\sim
\widetilde{\A}$, where $\widetilde{\A}$ is the Coulomb solution with complete Cauchy
data $(A_{(0)},A_{(\bd)},0,0)$.  Likewise, $\A'\sim\widetilde{\A}'$ where $\widetilde{\A}'$
is the Coulomb solution with complete Cauchy data $(A_{(0)}+\bd\alpha,A_{(\bd)},0,0)$.

Next, on the Cauchy surface $\Sigma$, specify initial data $(\alpha, 0,0,0)\in
\Omega^{p-1}_0(\Sigma)\oplus\Omega^{p-1}_0(\Sigma)\oplus \Omega^{p-2}_0(\Sigma)
\oplus\Omega^{p-2}_0(\Sigma)$.  By Theorem~\ref{prop:_Cauchy for Box}, there
exists a unique smooth $\lambda$ which is a solution to $\Box\lambda=0$ with
this Cauchy data.  Furthermore, by Proposition~\ref{prop:Lorenz_soln} and the
text following the proposition, $\lambda$ is a Coulomb solution to the Maxwell
equation.

Define $\B = \widetilde{\A} + \bd\lambda$, then
\begin{equation}
\Box\B = \Box\widetilde{\A} + \Box\bd\lambda = \bd\Box\lambda = 0
\end{equation}
and
\begin{equation}
\bdelta\B = \bdelta\widetilde{\A} +\bdelta\bd\lambda = 0.
\end{equation}
Evaluating the initial data for $\B$, we find;
\begin{eqnarray}
\rho_{(0)}\B &=& \rho_{(0)}\widetilde{\A} + \bd\rho_{(0)}\lambda=A_{(0)}
+\bd\alpha, \nonumber\\
\rho_{(\bd)}\B &=& \rho_{(\bd)}\widetilde{\A} + \rho_{(\bd)}\bd\lambda =
A_{(\bd)}, \nonumber\\
\rho_{(n)}\B &=&  \rho_{(n)}\widetilde{\A} + \rho_{(n)}\bd\lambda =
\rho_{(\bd)}\lambda = 0,\nonumber\\
\rho_{(\bdelta)}\B &=&\rho_{(\bdelta)}\widetilde{\A} + \rho_{(\bdelta)}\bd\lambda =
\rho_{(0)}\bdelta\bd\lambda = 0.
\end{eqnarray}
Thus, $\B$ is a Coulomb solution with complete Cauchy data identical to that
$\widetilde{\A}'$. By the uniqueness of solutions to the Klein-Gordon equation,
we have $\widetilde{\A}' = \B = \widetilde{\A} + \bd\lambda$, or more
simply $\widetilde{\A'}\sim\widetilde{\A}$.  Therefore, we conclude $\A'\sim\A$.
\end{proof}

It is unfortunate that we need to force the cohomology constraint on the spacetime
in part (a) of the proposition above;
a requirement we hope to lift in future work.  We are now prepared to show that
any solution to the Maxwell equation is gauge equivalent to a fundamental solution.

\begin{Thm}{\rm(Gauge Equivalence to Fundamental Solutions)}\label{prop:fund_soln_gauge}\\
(a) Let $\J\in\Omega^p(\stM)$ with $\bdelta\J = 0$ and $\mbox{supp}(\J)$ compact to
the past/future, then $\A^\pm = E^\pm\J$ solves $-\bdelta\bd\A^\pm = \J$.\\
(b) If $\A^\pm\in\Omega^p(\stM)$, $\mbox{supp}(\A^\pm)$ is compact to the
past/future and $-\bdelta\bd\A^\pm = \J$ (so $\bdelta\J=0$ and $\mbox{supp}(\J)$
compact to the past/future) then $\A^\pm \sim E^\pm\J$.\\
(c) $\A\in\Omega^p(\stM)$ satisfies $-\bdelta\bd\A = 0$ on spacetimes with
compact spacelike Cauchy surfaces if and only if $\A\sim E\J$ for some
$\J\in\Omega^p_0(\stM)$ with $\bdelta\J = 0$.
\end{Thm}
\begin{proof}
Part (a) is proven directly in Proposition 4 by Dimock \cite{Dimock}, to which we
refer the reader.  Parts (b) and (c) are generalizations of the corresponding
parts of proposition 4 in Dimock and we leave the proof to the reader.
\end{proof}

\subsection{Classical Phase Space}

Finally, as a precursor to quantization we discuss the classical phase
space for the $p$-form field which consists of a vector space and a
non-degenerate antisymmetric bilinear form.  For the most part this section
closely follows the discussion of the phase space for the classical
electromagnetic field found in Dimock \cite{Dimock}. All of the propositions
in his Section 3 trivially generalize from one forms to p-forms, so we will
therefore be very brief.  In addition, to simplify further discussion, we will
also assume from this point forward that the Cauchy surface is compact, thus
$\Omega^p(\Sigma)=\Omega^p_0(\Sigma)$.  We conjecture that our analysis
can be appropriately extended to spacetimes with noncompact Cauchy surfaces.

In a typical Hamiltonian formulation, the Cauchy data for the field is
specified on some ``constant time'' hypersurface.  The field's then
evolves according to the flow generated by Hamilton's equations.  We could
use the complete set of Cauchy data for our p-form field, but as we have
seen above, the Cauchy problem is well posed on gauge equivalent classes.
Therefore the initial formulation of the phase space can be accomplished
via the vector space where points are  $\left( A_{(0)},A_{(\bd)} \right)\in
\mathcal{P}_0(\Sigma):=\Omega^p_0(\Sigma)\times \Omega^p_0(\Sigma)$, i.e., the part
of the Cauchy data (with compact support) which can not be gauge transformed away.
Then on $\mathcal{P}_0(\Sigma) \times\mathcal{P}_0(\Sigma)$ define the antisymmetric
bilinear form
\begin{equation}
\sigma_{\Sigma} \left( A_{(0)}, A_{(\bd)} ; B_{(0)}, B_{(\bd)} \right) =
\langle  A_{(0)}, B_{(\bd)}\rangle_\Sigma - \langle B_{(0)}, A_{(\bd)}\rangle_\Sigma.
\end{equation}
Unfortunately this form is degenerate because for all $B_{(\bd)}$ with $\bdelta B_{(\bd)}
=0$ we have $\langle\bd\chi, B_{(\bd)}\rangle_\Sigma = \langle\chi, \bdelta B_{(\bd)}
\rangle_\Sigma = 0$ even though $\bd\chi\neq 0$.

The way to remove the degeneracy is to pass to gauge equivalent classes of
Cauchy data with points being given by the pair $( [A_{(0)}], A_{(\bd)} )\in
\mathcal{P}:=\Omega^p_0(\Sigma)/d\Omega^{p-1}_0(\Sigma)\times \Omega^p_0(\Sigma)$
then
\begin{equation}
\sigma_{\Sigma} \left([ A_{(0)}], A_{(\bd)} ; [B_{(0)}], B_{(\bd)} \right) =
\langle  [A_{(0)}], B_{(\bd)}\rangle_\Sigma - \langle [B_{(0)}], A_{(\bd)}
\rangle_\Sigma.
\end{equation}
is a suitable weakly non-degenerate bilinear form, as proven in Dimock's
Proposition 5 \cite{Dimock} .

Given any set of Cauchy data in $\mathcal{P}$, we know from the preceding
section that there is a unique equivalence class of solutions to the
Maxwell equations with this Cauchy data. Therefore, we can reformulated our
phase space  to include time evolution without the specific introduction of a
Hamiltonian.  Define the solution space of all real valued gauge equivalent
Maxwell solutions with Cauchy data on $\Sigma$ as
\begin{equation}
\mathcal{M}^p (\stM)\equiv {\left\{ \A\in\Omega^p(\stM)\, |
\; -\bdelta\bd\A=0 
\right\}}/{d\Omega^{p-1}_0(\stM)}.
\end{equation}
Since $\bd$ and $\bdelta$ are linear operators on $\Omega^p(\stM)$, we
have that the numerator of the above expression is a vector space.
Furthermore, quotienting by the exact forms is also linear so formally
$\mathcal{M}^p(\stM)$ is a vector space, elements of which are gauge
equivalent classes of Maxwell solutions denoted by $[\A]$.

Next, choose any Cauchy surface $\Sigma\subset\stM$ with $i:\Sigma\rightarrow
\stM$ and define the antisymmetric bilinear form $\sigma$ on $\mathcal{M}^p
(\stM)\times\mathcal{M}^p(\stM)$ by
\begin{equation}
\sigma( [ \A ], [ \B ]) \equiv  \int_\Sigma i^* \left( [\A ]\wedge
*\bd \B - [\B ]\wedge*\bd\A\right),\label{eq:symplectic_product}
\end{equation}
which is by definition gauge invariant.  It is also non-degenerate and
independent of the choice of Cauchy surface (See Dimock's proposition 6).
Thus $\left(\mathcal{M}^p (\stM),\sigma\right)$ is a suitable symplectic
phase space.

On this phase space we also want to consider linear functions which map
$\mathcal{M}^p(\stM)\rightarrow\real$ defined by $[\A]\mapsto\langle [\A] , f\rangle_\stM$
for all $ f\in\Omega^p_0(\stM)$ with $\bdelta f = 0$.  Such functions are related
to the symplectic form in the following sense\dots

\begin{Prop}\label{Prop:symplectic2}
For $[\A]\in\mathcal{M}^p(\stM)$ and  $f\in\Omega^p_0(\stM)$ where $\bdelta f=0$,
we have
\begin{equation}
\langle [\A] , f\rangle_\stM = -\sigma( [\A], [Ef]).
\end{equation}
\end{Prop}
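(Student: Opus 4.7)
The plan is to reduce both sides to the same pair of boundary integrals on $\Sigma$ by applying Proposition~\ref{Prop:A_with_f} to a Lorenz-gauge representative of $[\A]$ and by rewriting the symplectic form in terms of the $\rho$-pullbacks.

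First I would verify that both sides of the claimed identity are well defined. For the left-hand side, gauge invariance follows from the duality~\eqref{eq:d_delta_duality}: for any $\Lambda\in\Omega^{p-1}(\stM)$ for which the pairing makes sense, $\langle\bd\Lambda,f\rangle_\stM = \langle\Lambda,\bdelta f\rangle_\stM = 0$ because $\bdelta f=0$. For the right-hand side, I would note that $[Ef]$ actually lies in $\mathcal{M}^p(\stM)$: since $\bdelta f = 0$, Proposition~\ref{prop_commutators}(b) gives $\bdelta Ef = E\bdelta f = 0$, so $-\bdelta\bd Ef = \Box Ef + \bd\bdelta Ef = 0$. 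Next, by Proposition~\ref{prop:Big_gauge_equiv}(a), I may replace $\A$ by a gauge-equivalent Coulomb representative (in particular a Lorenz solution) and assume without loss of generality that $\bdelta\A = 0$, hence $\Box\A = -\bdelta\bd\A - \bd\bdelta\A = 0$ and $A_{(\bdelta)} = \rho_{(0)}\bdelta\A = 0$.

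With these reductions in place, the computation is short. Applying Proposition~\ref{Prop:A_with_f} to $\A$ with $\J=0$ produces four boundary terms, of which two drop out immediately: $A_{(\bdelta)}=0$ kills one, and $\rho_{(\bdelta)}Ef = \rho_{(0)}\bdelta Ef = 0$ kills the other. What remains is
\begin{equation*}
\langle\A,f\rangle_\stM
= \langle A_{(0)},\rho_{(\bd)}Ef\rangle_\Sigma
- \langle A_{(\bd)},\rho_{(0)}Ef\rangle_\Sigma .
\end{equation*}
On the other hand, the identification of $i^*(\U\wedge*\bd\V)$ with $\rho_{(0)}\U\wedge*_\Sigma\rho_{(\bd)}\V$ that was already used in passing from Green's identity to Eq.~\eqref{eq:applied_Greens} lets me rewrite the symplectic form~\eqref{eq:symplectic_product} as
\begin{equation*}
\sigma([\A],[Ef])
= \langle A_{(0)},\rho_{(\bd)}Ef\rangle_\Sigma
- \langle \rho_{(0)}Ef,\rho_{(\bd)}\A\rangle_\Sigma ,
\end{equation*}
and then the symmetry of $\langle\cdot,\cdot\rangle_\Sigma$ turns the second term into $\langle A_{(\bd)},\rho_{(0)}Ef\rangle_\Sigma$, matching the expression above.

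The only genuine subtlety is the step where the symplectic form is recast via $\rho_{(0)}$ and $\rho_{(\bd)}$; this needs the same sign conventions embedded in the definitions of those operators that were used in deriving~\eqref{eq:applied_Greens}, so I would simply point back to that derivation rather than redo it. After that point, the proof is just bookkeeping: the Lorenz condition on the chosen representative of $[\A]$ and the co-closedness of $f$ (which propagates to $Ef$) conspire to eliminate exactly the terms that distinguish Proposition~\ref{Prop:A_with_f} from the symplectic pairing.
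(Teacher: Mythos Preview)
Your proposal is correct and follows essentially the same route as the paper: apply Proposition~\ref{Prop:A_with_f} (equivalently Eq.~\eqref{eq:INVprop1}) to a representative of $[\A]$ satisfying the homogeneous wave equation, and observe that the two $\bdelta$-terms vanish because $\bdelta Ef=E\bdelta f=0$ and the chosen representative is Lorenz. The paper's proof is terser---it simply writes the four boundary terms and identifies the surviving two with $\sigma([\A],[Ef])$---whereas you make explicit the passage to a Lorenz/Coulomb representative via Proposition~\ref{prop:Big_gauge_equiv}(a) and the gauge-independence of both sides, but there is no substantive difference in strategy.
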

\begin{proof}
Choose any Cauchy surface $\Sigma$, then from Eq.~\ref{eq:INVprop1} above, we have
for all $[\A]$ that are homogeneous solutions of the Maxwell equation
\begin{equation}
\langle [\A], f\rangle_\stM = -\int_\Sigma i^* \left( [\A]\wedge *\bd Ef + \bdelta[\A]\wedge *Ef
-Ef\wedge *\bd[\A] - \bdelta Ef \wedge *[\A] \right)=-\sigma([\A], Ef).
\end{equation}
Also, recall that $Ef$ is a Lorenz solution and thus belongs to some equivalence
class, therefore giving us the result.
\end{proof}
Furthermore, the symplectic form induces a Poisson bracket operation on the
functions over the phase space. (For a detailed description of how this arises, see
Wald \cite{Wald_QFT} and/or Sect.\ 8.1 of AMR \cite{AMR88}.)  For the linear functions
considered above we calculate
\begin{equation}
\left\{ \sigma([\A],[Ef]),\sigma([A],[Ef']) \right\} = \sigma ([Ef],[Ef']).
\end{equation}

\section{Quantization of the generalized Maxwell field}
\label{sec:Quantization}

For electromagnetism in four dimensions, quantization is complicated by gauge
freedom. Even in Minkowski space this presents serious problems: as shown by
Strocchi~\cite{StrocchiI,StrocchiIII} in the Wightman axiomatic approach, the
vector potential cannot exist as an operator-valued distribution if it is to
transform correctly under the Lorentz group or even display commutativity at
spacelike separations in a weak sense.  We have already seen that the same
gauge freedom exists for the generalized Maxwell p-form field, so we are
expecting the same difficulty here.  However several researchers have already
addressed these issues and quantization of a massless free p-form field in
four-dimensional curved spacetime has, to our knowledge, been discussed in
three papers:

The first is by Folacci~\cite{Fola91} who quantizes $p$-form fields in a
``traditional'' manner by adding a gauge-breaking term to the action  which
then necessitates the introduction of Faddeev-Popov ghost fields to remove
spurious degrees of freedom. This is similar in approach to the Gupta--Bleuler
formalism for the free electromagnetic field in four dimensional spacetimes
\cite{Gupta50,Gupta,Crisp_etal}. However, unlike electromagnetism, which
requires only a single ghost field, the generic quantized $p$-form field
suffers from the phenomenon of having ``ghosts for ghosts,'' thus there are
a multiplicity of fields that need to be handled simultaneously \cite{Towns79,
Siegel80}.

The second paper, by Furlani \cite{Furl95}, employs the full Gupta--Bleuler
method of quantization for the electromagnetic field in four-dimensional
static spacetimes with compact Cauchy surfaces.  He constructs a Fock space
and a representation of the field operator $\A$ that satisfies the Klein-Gordon
equation as an operator identity.  This effectively quantizes all four components
of the one-form field.  To remove the two spurious degrees of freedom requires
applying the Lorenz gauge condition as a constraint on the space of states and
imposing a sesquilinear form that is only positive on the ``physical'' Fock
space.  In a later paper \cite{Furl99} Furlani also treats the quantization of
the Proca field in four-dimensional globally hyperbolic spacetimes.  Within
this paper he collects together many of the classical results referenced
in the preceding section.

The third paper, by Dimock \cite{Dimock}, uses a  more elegant
approach to quantize the free electromagnetic field in
four-dimensional spacetimes which does not introduce gauge breaking
terms and ghost fields. He constructs smeared field operators
$\widehat{[\A]}(\J)$ which may be smeared only with co-closed (divergence-free)
test functions, i.e., $\J\in\Omega^p_0(\stM)$ must satisfy $\bdelta \J=0$. These objects
may be interpreted as smeared gauge-equivalence classes of quantum
one-form fields: formally, $\widehat{[\A]}(\J) = \langle \A, \J\rangle_\stM,$ where
$\A$ is a representative of the equivalence class $[\A]$; since
$\bdelta \J=0$, we have $\langle \bd\Lambda,\J\rangle_\stM = \langle
\Lambda,\bdelta \J\rangle_\stM =0$ so this interpretation is indeed gauge
independent.  The resulting operators satisfy the Maxwell equations
in the weak sense and have the correct canonical commutation
relation. We adapt this approach to the generalized Maxwell field
and quantize in the manner found in Dimock \cite{Dimock} and Wald \cite{Wald_QFT}.

\subsection{Quantization via a Fock Space}

To pass from the classical world into the quantum realm requires
replacing our symplectic phase space with a Hilbert space, while
simultaneously promoting functions on the classical phase space to
self-adjoint operators that act on elements of said Hilbert space.
To maintain correspondence with the classical theory, the commutator
of such operators must be $-i$ times their classical Poisson
bracket. Thus we seek operators on a Hilbert space, indexed by
$u\in\mathcal{M}^p(\stM)$ and denoted $\widehat{[\A]}(u)\equiv
\hat\sigma([\A],[u])$, such that
\begin{equation}
\left[ \hat\sigma([\A],[u]) , \hat\sigma([\A],[u']) \right]=-i\sigma([u],[u']).
\end{equation}

We begin with the construction of our Fock space. Our symplectic phase
space is $(\mathcal{M}^p(\stM),\sigma)$ where elements $[\A]$ of
$\mathcal{M}^p(\stM)$
are gauge equivalent classes of real-valued $p$-form solutions to the
Maxwell equation, as defined in the section above. On this space, choose any
positive-definite, symmetric, bilinear map $\mu:\mathcal{M}^p(\stM)
\times\mathcal{M}^p(\stM)\rightarrow\real$ such that for all $[\A]\in \mathcal{M}^p
(\stM)$ we have
\begin{equation}
\mu([\A],[\A]) = \frac{1}{4}\sup_{[\B]\neq0}\frac{\left[\sigma
([\A],[\B])\right]^2}{\mu([\B],[\B])}.
\end{equation}
Many such $\mu$ of this type exist: For each complex structure $J$ on
$\mathcal{M}^p(\stM)$ which is compatible with $\sigma$ in the sense that
$-\sigma([\A],J[\B])$ is a positive-definite inner product gives rise to
such a $\mu$, although this method does not produce all such $\mu$.  For
further discussion on this point see pp. 41-42 of Wald \cite{Wald_QFT}.

We then define the norm $\parallel\cdot\parallel^2 =2\mu(\cdot,\cdot)$ which
is used to form $\mathfrak{m},$ the completion of $\mathcal{M}^p(\stM)$ with
respect to this norm. Next, define the operator $J:\mathfrak{m}\rightarrow\mathfrak{m}$ by
\begin{equation}
\sigma([\A],[\B]) = 2\mu([\A],J[\B])=([\A],J[\B])_\mathfrak{m},
\end{equation}
where $(\; ,\,)_\mathfrak{m}$ defined in the equation above is the inner
product on $\mathfrak{m}$.  As already indicated above, $J$ endows
$\mathfrak{m}$ with a complex structure.  Furthermore, one can prove
straightforwardly that $J$ satisfies $J^* = - J$ and
$J^* J = {\rm id}_\mathfrak{m}$.

The next step is to complexify $\mathfrak{m}$, i.e. $\mathfrak{m}\rightarrow\mathfrak{m}^\complex$,
and extend $\sigma$, $\mu$, and $J$ by complex linearity.  The resulting complex
space, endowed with the complex inner product
\begin{equation}
([\A],[\B])_{\mathfrak{m}^\complex} = 2\mu(\overline{[\A]},[\B])
\end{equation}
for $[\A], [\B] \in \mathfrak{m}^\complex$ is a complex Hilbert space. The
operator $J$ can be diagonalized into $\pm i$ eigenspaces, as $iJ$ is a bounded,
self-adjoint operator on which we can apply the Spectral Theorem. Therefore, we
can decompose $\mathfrak{m}^\complex$ into two orthogonal subspaces based upon the
eigenvalues of $iJ$.  Define $\mathcal{H}\subset\mathfrak{m}^\complex$ to be the
subspace with eigenvalue $+i$ for the operator $J$, which satisfies the three
properties: (i.) The inner product is positive definite over $\mathcal{H}$,
(ii.)  $\mathfrak{m}^\complex$ is equal to the span of $\mathcal{H}$ and its
complex conjugate space $\overline{\mathcal{H}}$, and (iii.) all elements of
$\mathcal{H}$ are orthogonal to all elements of $\overline{\mathcal{H}}$.
We also define the orthogonal projection map $K:\mathfrak{m}^\complex\rightarrow
\mathcal{H}$ with respect to the complex inner product by $K = \frac{1}{2}({\rm id}%
_{\mathfrak{m}^\complex} - i J)$.  Restricting this map to $\mathfrak{m}$ defines
a real linear map $K:\mathfrak{m}\rightarrow\mathcal{H}$, i.e. a map from the
Hilbert space of gauge-equivalent real-valued solutions of the Maxwell equation to
the complex Hilbert space $\mathcal{H}$. For any $[\A_1],[\A_2]\in\mathfrak{m}$
we have
\begin{equation}
\left( K[\A_1],K[A_2] \right)_\mathcal{H} = -i\sigma\left(\overline{K[\A_1]},K[\A_2]\right)
=\mu([\A_1],[\A_2]) -\frac{i}{2}\sigma([\A_1],[\A_2]).
\end{equation}
Finally, the Hilbert space for our quantum field theory is given by the symmetric
Fock space $\mathfrak{F}_s(\mathcal{H})$ over $\mathcal{H}$, i.e.,
\begin{equation}
\mathfrak{F}_s(\mathcal{H}) = \complex \oplus\left[ \bigoplus_{n=1}^\infty
\left( \otimes_s^n \mathcal{H}\right)\right],
\end{equation}
where $\otimes_s^n\mathcal{H}$ represents the $n$-th order symmetric tensor
product over $\mathcal{H}$.

Our next step is to define the appropriate self-adjoint operators on our Fock space.
Let $[f],[g]\in \mathcal{H}$, then for states in $\mathfrak{F}_s(\mathcal{H})$ of finite
particle number, we define the standard annihilation and creation operators,
$\hat{a}(\overline{[f]})$ and $\hat{a}^*([g])$, respectively, where the annihilation
operator is linear in the argument for the complex conjugate space $\overline{\mathcal{H}}$,
while the creation operator is linear in the argument for elements of  $\mathcal{H}$.
(See the appendix of Wald
\cite{Wald_QFT} for more detail.).  On a dense domain of the Fock space, the operators
satisfy the commutation relation
\begin{equation}
\left [ \hat{a}(\overline{[f]}),\hat{a}^*([g])\right] = \left( [f],[g] \right)_\mathcal{H}
\end{equation}
for all $[f],[g]$, with all other commutators vanishing.

From the analysis of the classical wave solutions in the preceding section, we know
that $E$ is a surjective map of all compactly supported test forms $\J\in\Omega^p_0(\stM)$
which are co-closed into an equivalence class in $\mathcal{M}^p(\stM)$, namely  $[E\J]$.
Furthermore, $\mathcal{M}^p(\stM)\subset \mathfrak{m}$, thus, combined with the orthogonal
projection $K$, we have that $K[E\J]\in\mathcal{H}$.  Therefore, we define the smeared quantum
field operator for all co-closed $\J\in\Omega^p_0(\stM)$) by
\begin{equation}
\widehat{[{\A}]}(\J) = \hat{\sigma}([\A],[E\J])= i \hat{a}(\overline{K[E\J]})-
i \hat{a}^*(K[E\J]).
\end{equation}
Note, this is a slight abuse of the notation used earlier where the argument
of $[\A]$ was an element of the phase space.
We now show that such an operator satisfies the generalized Maxwell equation
and canonical commutation relations in the sense of distributions.
\begin{Prop}\label{prop:WE_and CCR}
For $\J\in\Omega^p_0(\stM)$, $\bdelta\J = 0$ we have\\
(a) $\widehat{[{\A}]}(\J)$ satisfies the generalized Maxwell equation in the weak sense,
i.e., $\widehat{[{\A}]}(\bdelta\bd\J)=0$.\\
(b) $\left[\widehat{[{\A}]}(\J),\widehat{[{\A}]}(\J')\right]=-i\langle \J, E\J'\rangle_\stM$.
In particular, if $\mbox{supp }\J, \mbox{supp }\J'$ are spacelike separated then the
commutator is zero.
\end{Prop}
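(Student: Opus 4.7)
The plan for part (a) is to reduce everything to showing $E\bdelta\bd\J = 0$ at the level of forms. First, $\bdelta\bd\J\in\Omega^p_0(\stM)$ is automatically co-closed (since $\bdelta^2=0$), so it lies in the domain of $\widehat{[\A]}$. Commuting $E$ past both differential operators via Proposition~\ref{prop_commutators} gives $E\bdelta\bd\J = \bdelta\bd E\J$. Because $\J\in\Omega^p_0(\stM)$, we have $\Box E\J = 0$, i.e.\ $\bdelta\bd E\J = -\bd\bdelta E\J$; combined with $\bdelta E\J = E\bdelta\J = 0$ (using the hypothesis $\bdelta\J=0$), this forces $\bdelta\bd E\J=0$. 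Hence $[E\bdelta\bd\J]=[0]$ in $\mathcal{M}^p(\stM)$ and $\widehat{[\A]}(\bdelta\bd\J)=\hat\sigma([\A],[0])=0$.

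For part (b), my plan is to expand the commutator directly using the decomposition $\widehat{[\A]}(\J) = i\hat{a}(\overline{K[E\J]}) - i\hat{a}^*(K[E\J])$ together with the canonical relation $[\hat{a}(\overline{[f]}),\hat{a}^*([g])] = ([f],[g])_{\mathcal{H}}$ (all same-type commutators vanishing). The four cross terms collapse to
\begin{equation}
\left[\widehat{[\A]}(\J),\widehat{[\A]}(\J')\right] = (K[E\J],K[E\J'])_{\mathcal{H}} - (K[E\J'],K[E\J])_{\mathcal{H}} = 2i\,\mbox{Im}\,(K[E\J],K[E\J'])_{\mathcal{H}}.
\end{equation}
Feeding in the identity $(K[\A_1],K[\A_2])_{\mathcal{H}} = \mu([\A_1],[\A_2]) - \frac{i}{2}\sigma([\A_1],[\A_2])$ from the Fock space construction then reduces this expression to $-i\sigma([E\J],[E\J'])$.

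The last step is to translate $\sigma$ into the required spacetime pairing via Proposition~\ref{Prop:symplectic2}. Applying it with $[\A]=[E\J']$ and test form $\J$ (which is legitimate since $\bdelta\J=0$) yields $\langle[E\J'],\J\rangle_\stM = \sigma([E\J'],[E\J])$; symmetry of the pairing $\langle\U,\V\rangle_\stM = \langle\V,\U\rangle_\stM$ then gives $\langle\J,E\J'\rangle_\stM = -\sigma([E\J],[E\J'])$, so the commutator equals $i\langle\J,E\J'\rangle_\stM$ as required. The spacelike-separation assertion is immediate from the support property $\supp(E\J') \subseteq J^+(\supp\J')\cup J^-(\supp\J')$, which makes the integral vanish whenever $\supp\J$ lies in the causal complement of $\supp\J'$. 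I anticipate no real conceptual obstacle; the main thing to watch is careful sign bookkeeping, since the imaginary-part extraction, the antisymmetry of $\sigma$, and the swap of arguments in Proposition~\ref{Prop:symplectic2} each contribute signs that must combine to produce precisely $+i\langle\J,E\J'\rangle_\stM$ rather than its negative.
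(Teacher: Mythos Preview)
Your proposal is correct and, for part~(b), essentially identical to the paper's argument (you swap which slot of $\sigma$ you fill via Proposition~\ref{Prop:symplectic2}, but that is immaterial).

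For part~(a) there is a small but genuine difference worth noting. You use the hypothesis $\bdelta\J=0$ to conclude that $E\bdelta\bd\J$ vanishes \emph{as a form}: $E\bdelta\bd\J=\bdelta\bd E\J=-\bd\bdelta E\J=-\bd E\bdelta\J=0$. The paper instead argues only at the level of the equivalence class: for \emph{any} $\theta\in\Omega^p_0(\stM)$ one has $E\Box\theta=0$, whence $E\bdelta\bd\theta=-E\bd\bdelta\theta=-\bd E\bdelta\theta$, which is exact and therefore represents $[0]$ in $\mathcal{M}^p(\stM)$. So the paper's version establishes the weak field equation without invoking $\bdelta\J=0$ at all (and it remarks on this), at the cost of only concluding $[E\bdelta\bd\J]=0$ rather than $E\bdelta\bd\J=0$. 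Your route buys a pointwise vanishing statement but spends the co-closedness hypothesis to get it; either is perfectly adequate for the proposition as stated.
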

\begin{proof}
(a)  By definition we have $\widehat{[{\A}]}(\bdelta\bd\J) =i\hat{a}(\overline{K%
[E\bdelta\bd\J]}) - i\hat{a}^*(K[E\bdelta\bd\J])$, so we will show
$[E\bdelta\bd\J]=0$.  For any $\theta\in\Omega^p_0(\stM)$, we have
$\Box E\theta = E\Box\theta = 0$, so $[E\bdelta\bd\theta] = -[E\bd\bdelta\theta]
=-[\bd E \bdelta\theta] = 0$, because all exact forms are in the equivalence class
of zero. Since $\J\in\Omega^p_0(\stM)$, we have $[E\bdelta\bd\J]=0$ and
$\widehat{[{\A}]}(\bdelta\bd\J) = i\hat{a}(0)-i\hat{a}^*(0)=0$.  Notice, this part
of the proposition does not require the co-closed condition $\bdelta\J = 0$.\\
(b)  Substituting the definition of the field operator into the commutator yields
\begin{eqnarray}
[\widehat{[{\A}]}(\J),\widehat{[{\A}]}(\J')] &=&
\left [ \hat{a}(\overline{K[E\J]}),\hat{a}^*(K[E\J'])\right]+
\left [ \hat{a}^*(\overline{K[E\J]}),\hat{a}(K[E\J'])\right]\nonumber\\
&=& \left ( K[E\J], K[E\J']\right)_\mathcal{H}-\overline{\left
( K[E\J], K[E\J']\right)}_\mathcal{H}\nonumber\\
&=&-i\sigma([E\J],[E\J']).
\end{eqnarray}
By Proposition~\ref{Prop:symplectic2} we obtain the desired result.
Finally, if $\mbox{supp }\J$ and $\mbox{supp }\J'$ are spacelike separated,
then $\mbox{\supp }\J \cap \mbox{\supp }E\J' = \emptyset$ and the integral
in the inner product vanishes.
\end{proof}

\subsection{Algebraic/Local Quantum Field Theory}

It is well known that different choices of $\mu$ obviously lead to
different constructions of the Fock space $\mathfrak{F}_s(\mathcal{H})$
and hence unitarily inequivalent quantum field theories \cite{Wald_QFT}.
In Minkowski spacetime, Poincar\'{e} invariance picks out a ``preferred''
$\mu$ which leads to a Hilbert space $\mathcal{H}$ of purely positive
frequency solutions to build the Fock space from.  There are also purely
positive frequency solutions in curved stationary spacetimes where the
time translation Killing field generates an isometry similar to that
of Poincar\'{e} invariance in Minkowski spacetime.  In a general curved
spacetime there may be no such isometries, so purely positive frequency
solutions may no exist and the notion of particles becomes somewhat
ambiguous.
This situation led to the development of the algebraic approach to
quantization, also called local quantum field theory. For a general
review of this topic, we recommend the articles by Buchholz \cite{Buch00},
Buchholz and Haag \cite{Bu&Ha00}, and Wald \cite{Wald06}. For a more
thorough discussion see \cite{BrFrVe03,Ho&Wa01,Ho&Wa02, Dimo80}.
Our notation will closely follow that found in Chapter~4 of B\"{a}r et.\ al.\ \cite{BGP07}.

As our final task, we would like to show that our quantized field
theory can be used to generate the Weyl-system commonly used in
algebraic quantum field theory.  The creation and annihilation
operators are unbounded, so in order to work with bounded operators
we introduce the unitary operators on our Fock space
\begin{equation}
W([u]) = \exp\left( i\hat{\sigma}([\A],[u]) \right)
\end{equation}
for all $[u]\in\mathcal{M}^p(\stM)$.  From the definition of the field
operator and its commutation relations, it is relatively straight
forward to show that this map satisfies
\begin{eqnarray}
i.  &&   W([0])= id_{\mathfrak{F}_s(\mathcal{H})},\\
ii. &&   W(-[u]) = W([u])^*,\\
iii.&&   W([u])\cdot W([v]) = e^{-i\sigma([u],[v])/2} W([u]+[v]).
\end{eqnarray}
(The last relation follows from the Baker-Campbell-Hausdorff formula.)
The canonical commutation relation (CCR) algebra $\mathfrak{A}$ is defined
as the $C^*$-algebra generated by $W([E\J])$ for all co-closed
$\J\in\Omega^p_0(\stM)$.
The CCR-algebra $\mathfrak{A}$ together with the map $W$ forms a
Weyl-system for our symplectic phase space $(\mathcal{M}^p(\stM),\sigma)$
which satisfies the Haag-Kastler axioms as generalized by Dimock
\cite{Dimo80}.  The elements of the algebra are interpreted as the
observables related to the quantum field which satisfy the generalized
Maxwell equation. By Theorem~4.2.9 of B\"{a}r {\it et.\ al.}, this
CCR-representation is essentially unique.

Lastly, we would like to indicate that two very different constructions
of the Weyl-system for a symplectic phase space that could be used
are given in B\"{a}r {\it et.\ al.} \cite{BGP07}.  Unfortunately, both
of the Hilbert space representations they construct are not considered
physical because the states (vectors) in the Hilbert space are not
Hadamard, i.e., the two-point function for these states is not related
to a certain $p$-form  Klein-Gordon bisolution of Hadamard form.
In this manuscript, we have given a framework for the rigorous quantization
of the $p$-form field for which the issue of states being Hadamard can
be addressed in due course. In the case of the 0-form field, the Maxwell
equation and the Klein-Gordon equation are the same, so finding Hadamard
states is straightforward. The issue of Hadamard states for the 1-form field
in four-dimensional globally hyperbolic spacetimes can be found in Fewster
and Pfenning \cite{Pfen03}.  We will complete the discussion of Hadamard
states for the general $p$-form field and develop the quantum weak energy
inequality for these states in our next paper.


\section{Conclusions}

In this manuscript we quantized the generalized Maxwell field $\A$ on
globally hyperbolic spacetimes with compact Cauchy surfaces.  We began
by taking the Maxwell equations into the language of exterior differential
calculus.  The resulting field equation \ref{eq:generalized_potential_eq}
could then be carried to any dimension.  Rather remarkably, we found that
minimally coupled scalar field and the electromagnetic field are actually
two examples of a single $p$-form field theory in arbitrary
dimension.  We then discussed fundamental solutions and the Cauchy problem
for the classical $p$-form field theory where we showed that the Cauchy
problem was well posed if we worked in terms of gauge equivalent classes
of solutions.  This was followed by a discussion of the classical, symplectic
phase space consisting of all real valued gauge equivalent Maxwell
solutions $\mathcal{M}^p(\stM)$ and a non-degenerate antisymetric bilinear form
$\sigma([\A],[\B])$ for $\A,\B\in\mathcal{M}^p(\stM)$.  The theory was then
quantized by promoting functions on the phase space to operators that act on
elements of a Hilbert space.  The appropriately selected operators were shown in
Proposition~\ref{prop:WE_and CCR} to satisfy the generalized Maxwell equation
in the weak sense and have the proper canonical commutation relations.
Finally the Weyl-system for our field theory was developed.



\begin{acknowledgments}
\end{acknowledgments}

I would like to thank C.J.\ Fewster and J.C.\ Loftin for numerous illuminating
discussions, and D. Hunt for pointing out missteps in the originally published
version. I would also like to thank C.J.\ Fewster  for
his hospitality and the hospitality of the Department of Mathematics at the
University of York where part of this research was carried out.  This research
was funded by a grant from the US Army Research Office through the USMA Photonics
Research Center.
\vspace{12pt}


\appendix

\section{Cauchy Problem for $\F$}

\begin{Prop}\label{Prop:A_exist_to_give_F}
Let $F_{(0)}\in\Omega_0^p(\Sigma)$ and $F_{(n)}\in\Omega^{p-1}_0(\Sigma)$
with $\bd F_{(0)}=0$ and $\bdelta F_{(n)}=0$ specify Cauchy data for the
field strength $\F\in\Omega^p(\stM)$, with $0<p\le n$, such that
\begin{equation}
\rho_{(0)}\F = F_{(0)}
\qquad\mbox{ and}\qquad
\rho_{(n)}\F = F_{(n)}.
\label{eq:Fcauchy}
\end{equation}
Given this data, there exists a smooth potential $\A\in\Omega^{p-1}(\stM)$
such that $\F = \bd\A$ satisfies the generalized Maxwell equations  $\bd\F=0$
and $\bdelta\F=0$, as well as the conditions \ref{eq:Fcauchy}.
\end{Prop}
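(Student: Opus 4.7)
The plan is to realize $\F$ as $\bd\A$ for a Lorenz-gauge potential $\A\in\Omega^{p-1}(\stM)$ solving $-\bdelta\bd\A=0$, with Cauchy data for $\A$ chosen so that $\bd\A$ inherits the prescribed Cauchy data $F_{(0)}$ and $F_{(n)}$. Both generalized Maxwell equations then come for free: $\bd\F=\bd^2\A=0$ automatically, while $\bdelta\F=\bdelta\bd\A=0$ is precisely the Lorenz equation satisfied by $\A$.

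The first task is to translate the prescribed Cauchy data for $\F$ into data for $\A$. Naturality of the pullback gives $\rho_{(0)}\bd\A=\bd\rho_{(0)}\A=\bd A_{(0)}$, and comparing the defining formulas for $\rho_{(n)}$ acting on $p$-forms with $\rho_{(\bd)}$ acting on $(p-1)$-forms shows that the sign prefactors $(-1)^{(n-p)(p-1)+(n-1)}$ agree, so $\rho_{(n)}\bd\A=\rho_{(\bd)}\A=A_{(\bd)}$. I therefore need Cauchy data $(A_{(0)},A_{(\bd)},0,0)$ for $\A$ with $\bd A_{(0)}=F_{(0)}$ and $A_{(\bd)}=F_{(n)}$. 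The second equation is satisfied by simply setting $A_{(\bd)}:=F_{(n)}\in\Omega^{p-1}_0(\Sigma)$, and the co-closed hypothesis $\bdelta A_{(\bd)}=\bdelta F_{(n)}=0$ required by Theorem~\ref{Prop:_A existance for general J} is then automatic.

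Once a suitable $A_{(0)}$ is in hand, I would apply Theorem~\ref{Prop:_A existance for general J} with $\J=0$ to produce a global Lorenz-gauge solution $\A\in\Omega^{p-1}(\stM)$ satisfying $-\bdelta\bd\A=0$ with the selected Cauchy data. The verification of the statement is then immediate: $\bd\F=0$ and $\bdelta\F=0$ as noted above, while the identities $\rho_{(0)}\bd\A=\bd A_{(0)}=F_{(0)}$ and $\rho_{(n)}\bd\A=A_{(\bd)}=F_{(n)}$ give the prescribed Cauchy data for $\F$.

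The main obstacle is the construction of $A_{(0)}$: I must produce a compactly supported $(p-1)$-form on $\Sigma$ whose exterior derivative equals the closed, compactly supported $p$-form $F_{(0)}$. This is a cohomological question, solvable precisely when $[F_{(0)}]=0$ in the compactly supported de Rham cohomology $H^p_c(\Sigma)$, and in particular whenever $H^p_c(\Sigma)$ is trivial. In the setting of the paper's later sections, where $\Sigma$ is assumed compact, this is an ordinary de Rham cohomology statement about $\Sigma$ and a Hodge-theoretic argument yields the primitive under an appropriate triviality hypothesis, paralleling the cohomological caveat already built into Theorem~\ref{Prop:_A existance for general J} for $\rho_{(n)}\J$.
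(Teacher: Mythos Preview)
Your proposal is correct and follows essentially the same route as the paper: set $A_{(\bd)}=F_{(n)}$, invoke a cohomological hypothesis on $\Sigma$ (the paper phrases it as $H^p_c(\Sigma)=\{[0]\}$ or contractibility) to produce a compactly supported primitive $A_{(0)}$ with $\bd A_{(0)}=F_{(0)}$, then apply the existence theory for homogeneous Maxwell/Lorenz solutions and verify the pullback identities $\rho_{(0)}\bd\A=\bd A_{(0)}$ and $\rho_{(n)}\bd\A=\rho_{(\bd)}\A$. The only imprecision is your citation of Theorem~\ref{Prop:_A existance for general J}, which asserts existence of \emph{some} Lorenz solution for a given $\J$ but does not let you prescribe the Cauchy data $(A_{(0)},A_{(\bd)})$; the paper instead cites Theorem~\ref{prop:Maxwell_existence} (with $\J=0$), whose statement explicitly allows you to fix $A_{(0)}$ and $A_{(\bd)}=\omega$ subject to $\bdelta\omega=0$.
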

\begin{proof}
We know that $\F = \bd\A$ will satisfy the Maxwell equations if
$\bdelta\bd\A = 0$.  To show that such an $\A$ exists we choose as Cauchy
data:
\begin{equation}
\bd A_{(0)} = F_{(0)},\qquad
A_{(\bd)} = F_{(n)},\label{eq:F_Cauchy_Data}
\end{equation}
while $A_{(n)}$ and $A_{(\bdelta)}$ are arbitrary up to $\bd\A_{(\bdelta)}=0$.

The first thing to address is the existence of $A_{(0)}$.  For
non-compact Cauchy surface $\Sigma$ we could restrict to only
those manifolds that are contractible. By the Poincar\'{e}
lemma for contractible manifolds (Theorem~6.4.18 of AMR
\cite{AMR88}) all closed $p$-forms (for $p>0$) are exact.
Alternatively, we could require that the
compactly supported deRham cohomology group $H^p_c(\Sigma)$ for
$p$-forms on the Cauchy surface be of dimension zero, i.e.
$H^p_c(\Sigma)=\{[0]\}$. This is a restriction on the topology
of the Cauchy surface.  If we do have a trivial deRham
cohomology group then all closed $p$-forms $F_{(0)}$ are exact.
Either the contractible or cohomology condition is sufficient
to allow for the existence of a suitable $A_{(0)}$.

From the initial Cauchy data on $\F$
we have
\begin{equation}
\bdelta A_{(\bd)} = \bdelta F_{(n)} = 0.
\end{equation}
Our Cauchy data for $\A$ has the properties necessary to use
Proposition~\ref{thm_Maxwell_solns}. So $\A$ is a
solution to $-\bdelta\bd\A=0$ and therefore $\F=\bd\A$ is a
solution to the generalized Maxwell equations.

Now we show this also reproduces the Cauchy data.  We evaluate
\begin{equation}
\rho_{(0)}\F = \rho_{(0)}\bd\A = \bd\rho_{(0)}\A
=\bd A_{(0)}= F_{(0)}.
\end{equation}
Next we evaluate
\begin{equation}
\rho_{(n)}\F = \rho_{(n)}\bd\A
=\rho_{(\bd)}\A = A_{(\bd)} = F_{(n)}.
\end{equation}
The remaining two pullbacks are trivially zero since
\begin{equation}
\rho_{(\bd)} \F=\rho_{(\bd)} \bd \A = 0
\end{equation}
and
\begin{equation}
\rho_{(\bdelta)} \F=\rho_{(\bdelta)} \bd \A = \rho_{(0)} \bdelta\bd \A=0.
\end{equation}

\end{proof}
\goodbreak


\begin{thebibliography}{35}
\expandafter\ifx\csname natexlab\endcsname\relax\def\natexlab#1{#1}\fi
\expandafter\ifx\csname bibnamefont\endcsname\relax
  \def\bibnamefont#1{#1}\fi
\expandafter\ifx\csname bibfnamefont\endcsname\relax
  \def\bibfnamefont#1{#1}\fi
\expandafter\ifx\csname citenamefont\endcsname\relax
  \def\citenamefont#1{#1}\fi
\expandafter\ifx\csname url\endcsname\relax
  \def\url#1{\texttt{#1}}\fi
\expandafter\ifx\csname urlprefix\endcsname\relax\def\urlprefix{URL }\fi
\providecommand{\bibinfo}[2]{#2}
\providecommand{\eprint}[2][]{\url{#2}}

\bibitem[{\citenamefont{Buchdahl}(1958)}]{Buch58}
\bibinfo{author}{\bibfnamefont{H.~A.} \bibnamefont{Buchdahl}},
  \bibinfo{journal}{Il Nuovo Cimento} \textbf{\bibinfo{volume}{10}},
  \bibinfo{pages}{3058} (\bibinfo{year}{1958}).

\bibitem[{\citenamefont{Buchdahl}(1962)}]{Buch62}
\bibinfo{author}{\bibfnamefont{H.~A.} \bibnamefont{Buchdahl}},
  \bibinfo{journal}{Il Nuovo Cimento} \textbf{\bibinfo{volume}{25}},
  \bibinfo{pages}{486} (\bibinfo{year}{1962}).

\bibitem[{\citenamefont{Buchdahl}(1982{\natexlab{a}})}]{Buch82a}
\bibinfo{author}{\bibfnamefont{H.~A.} \bibnamefont{Buchdahl}},
  \bibinfo{journal}{J. Phys. A} \textbf{\bibinfo{volume}{15}},
  \bibinfo{pages}{1} (\bibinfo{year}{1982}{\natexlab{a}}).

\bibitem[{\citenamefont{Buchdahl}(1982{\natexlab{b}})}]{Buch82b}
\bibinfo{author}{\bibfnamefont{H.~A.} \bibnamefont{Buchdahl}},
  \bibinfo{journal}{J. Phys. A} \textbf{\bibinfo{volume}{15}},
  \bibinfo{pages}{1057} (\bibinfo{year}{1982}{\natexlab{b}}).

\bibitem[{\citenamefont{Buchdahl}(1984)}]{Buch84}
\bibinfo{author}{\bibfnamefont{H.~A.} \bibnamefont{Buchdahl}},
  \bibinfo{journal}{Class. Quantum Grav.} \textbf{\bibinfo{volume}{1}},
  \bibinfo{pages}{189} (\bibinfo{year}{1984}).

\bibitem[{\citenamefont{Buchdahl}(1987)}]{Buch87}
\bibinfo{author}{\bibfnamefont{H.~A.} \bibnamefont{Buchdahl}},
  \bibinfo{journal}{Class. Quantum Grav.} \textbf{\bibinfo{volume}{4}},
  \bibinfo{pages}{1055} (\bibinfo{year}{1987}).

\bibitem[{\citenamefont{Gibbons}(1976)}]{Gibb76}
\bibinfo{author}{\bibfnamefont{G.~W.} \bibnamefont{Gibbons}},
  \bibinfo{journal}{J. Phys. A: Math. Gen.} \textbf{\bibinfo{volume}{9}},
  \bibinfo{pages}{145} (\bibinfo{year}{1976}).

\bibitem[{\citenamefont{Higuchi}(1989)}]{Higu89}
\bibinfo{author}{\bibfnamefont{A.}~\bibnamefont{Higuchi}},
  \bibinfo{journal}{Class. Quantum Grav.} \textbf{\bibinfo{volume}{6}},
  \bibinfo{pages}{397} (\bibinfo{year}{1989}).

\bibitem[{\citenamefont{Abraham et~al.}(1988)\citenamefont{Abraham, Marsden,
  and Ratiu}}]{AMR88}
\bibinfo{author}{\bibfnamefont{R.}~\bibnamefont{Abraham}},
  \bibinfo{author}{\bibfnamefont{J.~E.} \bibnamefont{Marsden}},
  \bibnamefont{and} \bibinfo{author}{\bibfnamefont{T.}~\bibnamefont{Ratiu}},
  \emph{\bibinfo{title}{Manifolds, Tensor Analysis, and Applications}}
  (\bibinfo{publisher}{Springer-Verlag}, \bibinfo{address}{New York},
  \bibinfo{year}{1988}), \bibinfo{edition}{2nd} ed.

\bibitem[{\citenamefont{Fewster and Pfenning}(2003)}]{Pfen03}
\bibinfo{author}{\bibfnamefont{C.~J.} \bibnamefont{Fewster}} \bibnamefont{and}
  \bibinfo{author}{\bibfnamefont{M.~J.} \bibnamefont{Pfenning}},
  \bibinfo{journal}{J. Math. Phys.} \textbf{\bibinfo{volume}{44}},
  \bibinfo{pages}{4480} (\bibinfo{year}{2003}), \bibinfo{note}{gr-qc/0303106}.

\bibitem[{\citenamefont{Wald}(1984)}]{Wald_GR}
\bibinfo{author}{\bibfnamefont{R.~M.} \bibnamefont{Wald}},
  \emph{\bibinfo{title}{General Relativity}} (\bibinfo{publisher}{The
  University of Chicago Press}, \bibinfo{address}{Chicago, Illinois},
  \bibinfo{year}{1984}).

\bibitem[{\citenamefont{Lichnerowicz}(1961)}]{Lich61}
\bibinfo{author}{\bibfnamefont{A.}~\bibnamefont{Lichnerowicz}},
  \bibinfo{journal}{Publications Math{\'{e}}matiques de l'I.H.{\'{E}}.S.} pp.
  \bibinfo{pages}{{293--344}} (\bibinfo{year}{1961}).

\bibitem[{\citenamefont{{B\"{a}r} et~al.}(2007)\citenamefont{{B\"{a}r}, Ginoux,
  and {Pf\"{a}ffle}}}]{BGP07}
\bibinfo{author}{\bibfnamefont{C.}~\bibnamefont{{B\"{a}r}}},
  \bibinfo{author}{\bibfnamefont{N.}~\bibnamefont{Ginoux}}, \bibnamefont{and}
  \bibinfo{author}{\bibfnamefont{F.}~\bibnamefont{{Pf\"{a}ffle}}},
  \emph{\bibinfo{title}{Wave Equations on Lorentzian Manifolds and
  Quantization}}, ESI Lectures in Mathematics and Physics
  (\bibinfo{publisher}{European Mathematical Society},
  \bibinfo{address}{Germany}, \bibinfo{year}{2007}).

\bibitem[{\citenamefont{{Choquet-Bruhat}}(1968)}]{Choq67}
\bibinfo{author}{\bibfnamefont{Y.}~\bibnamefont{{Choquet-Bruhat}}},
  \emph{\bibinfo{title}{Batelle Rencontres}} (\bibinfo{publisher}{Benjamin},
  \bibinfo{address}{New York}, \bibinfo{year}{1968}),
  chap.~\bibinfo{chapter}{IV}, pp. \bibinfo{pages}{{84--106}}, 1967 Lectures in
  Mathematics and Physics.

\bibitem[{\citenamefont{Sahlmann and Verch}(2001)}]{Sah&Ver}
\bibinfo{author}{\bibfnamefont{H.}~\bibnamefont{Sahlmann}} \bibnamefont{and}
  \bibinfo{author}{\bibfnamefont{R.}~\bibnamefont{Verch}},
  \bibinfo{journal}{Rev. Math. Phys.} \textbf{\bibinfo{volume}{13}},
  \bibinfo{pages}{1203} (\bibinfo{year}{2001}),
  \bibinfo{note}{math-ph/0008029}.

\bibitem[{\citenamefont{Dimock}(1992)}]{Dimock}
\bibinfo{author}{\bibfnamefont{J.}~\bibnamefont{Dimock}},
  \bibinfo{journal}{Rev. Math. Phys.} \textbf{\bibinfo{volume}{4}},
  \bibinfo{pages}{223} (\bibinfo{year}{1992}).

\bibitem[{\citenamefont{Furlani}(1999)}]{Furl99}
\bibinfo{author}{\bibfnamefont{E.~P.} \bibnamefont{Furlani}},
  \bibinfo{journal}{J. Math. Phys.} \textbf{\bibinfo{volume}{40}},
  \bibinfo{pages}{2611} (\bibinfo{year}{1999}).

\bibitem[{\citenamefont{Reed and Simon}(1980)}]{RSvI:80}
\bibinfo{author}{\bibfnamefont{M.}~\bibnamefont{Reed}} \bibnamefont{and}
  \bibinfo{author}{\bibfnamefont{B.}~\bibnamefont{Simon}},
  \emph{\bibinfo{title}{Functional Analysis}}, vol.~\bibinfo{volume}{I} of
  \emph{\bibinfo{series}{Methods of Modern Mathematical Physics}}
  (\bibinfo{publisher}{Academic Press}, \bibinfo{address}{San Diego},
  \bibinfo{year}{1980}), \bibinfo{edition}{revised and enlarged} ed.

\bibitem[{\citenamefont{Wald}(1994)}]{Wald_QFT}
\bibinfo{author}{\bibfnamefont{R.~M.} \bibnamefont{Wald}},
  \emph{\bibinfo{title}{Quantum Field Theory in Curved Spacetime and Black Hole
  Thermodynamics}}, Chicago Lectures in Physics (\bibinfo{publisher}{The
  University of Chicago Press}, \bibinfo{address}{Chicago, Illinois},
  \bibinfo{year}{1994}).

\bibitem[{\citenamefont{Strocchi}(1967)}]{StrocchiI}
\bibinfo{author}{\bibfnamefont{F.}~\bibnamefont{Strocchi}},
  \bibinfo{journal}{Phys. Rev.} \textbf{\bibinfo{volume}{165}},
  \bibinfo{pages}{1429} (\bibinfo{year}{1967}).

\bibitem[{\citenamefont{Strocchi}(1970)}]{StrocchiIII}
\bibinfo{author}{\bibfnamefont{F.}~\bibnamefont{Strocchi}},
  \bibinfo{journal}{Phys. Rev. D} \textbf{\bibinfo{volume}{2}},
  \bibinfo{pages}{2334} (\bibinfo{year}{1970}).

\bibitem[{\citenamefont{Folacci}(1991)}]{Fola91}
\bibinfo{author}{\bibfnamefont{A.}~\bibnamefont{Folacci}}, \bibinfo{journal}{J.
  Math. Phys.} \textbf{\bibinfo{volume}{32}}, \bibinfo{pages}{2813}
  (\bibinfo{year}{1991}).

\bibitem[{\citenamefont{Gupta}(1950)}]{Gupta50}
\bibinfo{author}{\bibfnamefont{S.~N.} \bibnamefont{Gupta}},
  \bibinfo{journal}{Proc. Phys. Soc. London} \textbf{\bibinfo{volume}{A63}},
  \bibinfo{pages}{681} (\bibinfo{year}{1950}).

\bibitem[{\citenamefont{Gupta}(1977)}]{Gupta}
\bibinfo{author}{\bibfnamefont{S.~N.} \bibnamefont{Gupta}},
  \emph{\bibinfo{title}{Quantum Electrodynamics}} (\bibinfo{publisher}{Gordon
  and Breach Science Publishers, Inc.}, \bibinfo{address}{New York},
  \bibinfo{year}{1977}).

\bibitem[{\citenamefont{Crispino et~al.}(2001)\citenamefont{Crispino, Higuchi,
  and Matsas}}]{Crisp_etal}
\bibinfo{author}{\bibfnamefont{L.~C.~B.} \bibnamefont{Crispino}},
  \bibinfo{author}{\bibfnamefont{A.}~\bibnamefont{Higuchi}}, \bibnamefont{and}
  \bibinfo{author}{\bibfnamefont{G.~E.~A.} \bibnamefont{Matsas}},
  \bibinfo{journal}{Phys. Rev. D} \textbf{\bibinfo{volume}{63}},
  \bibinfo{pages}{124008} (\bibinfo{year}{2001}),
  \bibinfo{note}{gr-qc/0011070}.

\bibitem[{\citenamefont{Townsend}(1979)}]{Towns79}
\bibinfo{author}{\bibfnamefont{P.~K.} \bibnamefont{Townsend}},
  \bibinfo{journal}{Physics Letters} \textbf{\bibinfo{volume}{88B}},
  \bibinfo{pages}{97} (\bibinfo{year}{1979}).

\bibitem[{\citenamefont{Siegel}(1980)}]{Siegel80}
\bibinfo{author}{\bibfnamefont{W.}~\bibnamefont{Siegel}},
  \bibinfo{journal}{Physics Letters} \textbf{\bibinfo{volume}{93B}},
  \bibinfo{pages}{170} (\bibinfo{year}{1980}).

\bibitem[{\citenamefont{Furlani}(1995)}]{Furl95}
\bibinfo{author}{\bibfnamefont{E.~P.} \bibnamefont{Furlani}},
  \bibinfo{journal}{J. Math. Phys.} \textbf{\bibinfo{volume}{36}},
  \bibinfo{pages}{1063} (\bibinfo{year}{1995}).

\bibitem[{\citenamefont{Buchholz}(2000)}]{Buch00}
\bibinfo{author}{\bibfnamefont{D.}~\bibnamefont{Buchholz}}
  (\bibinfo{year}{2000}), \bibinfo{note}{plenary talk given at XIIIth
  International Congress on Mathematical Physics, London. math-ph/0011044}.

\bibitem[{\citenamefont{Buchholz and Haag}(2000)}]{Bu&Ha00}
\bibinfo{author}{\bibfnamefont{D.}~\bibnamefont{Buchholz}} \bibnamefont{and}
  \bibinfo{author}{\bibfnamefont{R.}~\bibnamefont{Haag}}, \bibinfo{journal}{J.
  Math. Phys.} \textbf{\bibinfo{volume}{41}}, \bibinfo{pages}{3674}
  (\bibinfo{year}{2000}), \bibinfo{note}{hep-th/9910243}.

\bibitem[{\citenamefont{Wald}(2006)}]{Wald06}
\bibinfo{author}{\bibfnamefont{R.~M.} \bibnamefont{Wald}}
  (\bibinfo{year}{2006}), \bibinfo{note}{contribution to 7th International
  Conference on the History of General Relativity. gr-qc/0608018}.

\bibitem[{\citenamefont{Brunetti et~al.}(2003)\citenamefont{Brunetti,
  Fredenhagen, and Verch}}]{BrFrVe03}
\bibinfo{author}{\bibfnamefont{R.}~\bibnamefont{Brunetti}},
  \bibinfo{author}{\bibfnamefont{K.}~\bibnamefont{Fredenhagen}},
  \bibnamefont{and} \bibinfo{author}{\bibfnamefont{R.}~\bibnamefont{Verch}},
  \bibinfo{journal}{Commun. Math. Phys} \textbf{\bibinfo{volume}{237}},
  \bibinfo{pages}{31} (\bibinfo{year}{2003}).

\bibitem[{\citenamefont{Hollands and Wald}(2001)}]{Ho&Wa01}
\bibinfo{author}{\bibfnamefont{S.}~\bibnamefont{Hollands}} \bibnamefont{and}
  \bibinfo{author}{\bibfnamefont{R.~M.} \bibnamefont{Wald}},
  \bibinfo{journal}{Commun. Math. Phys.} \textbf{\bibinfo{volume}{223}},
  \bibinfo{pages}{289} (\bibinfo{year}{2001}).

\bibitem[{\citenamefont{Hollands and Wald}(2002)}]{Ho&Wa02}
\bibinfo{author}{\bibfnamefont{S.}~\bibnamefont{Hollands}} \bibnamefont{and}
  \bibinfo{author}{\bibfnamefont{R.~M.} \bibnamefont{Wald}},
  \bibinfo{journal}{Commun. Math. Phys.} \textbf{\bibinfo{volume}{231}},
  \bibinfo{pages}{309} (\bibinfo{year}{2002}).

\bibitem[{\citenamefont{Dimock}(1980)}]{Dimo80}
\bibinfo{author}{\bibfnamefont{J.}~\bibnamefont{Dimock}},
  \bibinfo{journal}{Comm. Math. Phys.} \textbf{\bibinfo{volume}{77}},
  \bibinfo{pages}{219} (\bibinfo{year}{1980}).

\end{thebibliography}




\end{document}